\documentclass[11pt]{article}
\pdfoutput=1
\usepackage{amsmath, amssymb, amsthm, amsfonts,cite,alltt,clrscode}
\usepackage[english]{babel}
\usepackage{graphicx}
\usepackage{caption}
\usepackage{url}
\urlstyle{same}
\usepackage{float}
\usepackage{wrapfig}
\usepackage[list=true,listformat=simple]{subcaption}

\usepackage{hyperref}

\newtheorem{theorem}{Theorem}[section]
\newtheorem{lemma}[theorem]{Lemma}

\newtheorem{corollary}[theorem]{Corollary}
\newtheorem{definition}[theorem]{Definition}

\graphicspath{{./}}

\topmargin 0pt
\advance \topmargin by -\headheight
\advance \topmargin by -\headsep
\textheight 8.9in
\oddsidemargin 0pt
\evensidemargin \oddsidemargin
\marginparwidth 0.5in
\textwidth 6.5in

{\makeatletter
 \gdef\xxxmark{%
   \expandafter\ifx\csname @mpargs\endcsname\relax 
     \expandafter\ifx\csname @captype\endcsname\relax 
       \marginpar{xxx}
     \else
       xxx 
     \fi
   \else
     xxx 
   \fi}
 \gdef\xxx{\@ifnextchar[\xxx@lab\xxx@nolab}
 \long\gdef\xxx@lab[#1]#2{\textbf{[\xxxmark #2 ---{\sc #1}]}}
 \long\gdef\xxx@nolab#1{\textbf{[\xxxmark #1]}}
 \long\gdef\xxx@lab[#1]#2{}\long\gdef\xxx@nolab#1{}%
}

\let\realbibitem=\bibitem
\def\bibitem{\par \vspace{-1.2ex}\realbibitem}

\begin{document}

\title{Computational Complexity of Motion Planning \\ of a Robot through Simple Gadgets}

\author{%
  Erik D. Demaine%
    \thanks{MIT Computer Science and Artificial Intelligence Laboratory, 32 Vassar Street, Cambridge, MA 02139, USA, \protect\url{{edemaine,jaysonl}@mit.edu}, \protect\url{isaacg@alum.mit.edu}, \protect\url{mrudoy@gmail.com}}
\and
  Isaac Grosof\footnotemark[1]
    \thanks{Now at Carnegie Mellon University.}
\and
  Jayson Lynch\footnotemark[1]
\and
  Mikhail Rudoy\footnotemark[1]
    \thanks{Now at Google Inc.}
}

\date{}
\maketitle

\begin{abstract}
We initiate a general theory for analyzing the complexity of motion planning of a single robot
through a graph of ``gadgets'', each with their own state, set of locations, and
allowed traversals between locations that can depend on and change the state.
This type of setup is common to many robot motion planning hardness proofs.
We characterize the complexity for a natural simple case:
each gadget connects up to four locations in a perfect matching
(but each direction can be traversable or not in the current state), 
has one or two states, every gadget traversal is immediately undoable,
and that gadget locations are connected by an
always-traversable forest, possibly restricted to avoid crossings in the plane.
Specifically, we show that any single nontrivial four-location two-state
gadget type is enough for motion planning to become PSPACE-complete,
while any set of simpler gadgets (effectively two-location or one-state)
has a polynomial-time motion planning algorithm.
As a sample application, our results show that motion planning games with ``spinners'' are PSPACE-complete, establishing a new hard aspect of \emph{Zelda: Oracle of Seasons}.
\end{abstract}

\section{Introduction}
\label{sec:intro}

Many hardness proofs are based on \emph{gadgets} --- local pieces,
each often representing corresponding pieces of the input instance,
that combine to form the overall reduction.
Garey and Johnson \cite{NPBook} called gadgets ``basic units'' and
the overall technique ``local replacement proofs''.
The search for a hardness reduction usually starts by experimenting with small
candidate gadgets, seeing how they behave, and repeating until amassing
a sufficient collection of gadgets to prove hardness.

This approach leads to a natural question: what gadget sets suffice to
prove hardness?  There are many possible answers to this question, depending
on the precise meaning of ``gadget'' and the style of problem considered.
Schaefer \cite{Schaefer-1978-SAT} characterized the complexity of all
``Boolean constraint satisfiability'' gadgets, including easy problems
(2SAT, Horn SAT, dual-Horn SAT, XOR SAT) and hard problems
(3SAT, 1-in-3SAT, NAE 3SAT).
Constraint Logic \cite{GPCBook09} proves sufficiency of small sets of gadgets
on directed graphs that always satisfy one local rule
(weighted in-degree at least~$2$), in many game types
(1-player, 2-player, 2-team, polynomially bounded, unbounded),
although the exact minimal sets of required gadgets remain unknown.
Both of these general techniques naturally model ``global'' moves that
can be made anywhere at any time (while satisfying the constraints).
Nonetheless, the techniques have been successful at proving hardness for
problems where moves must be made local to an agent/robot that traverses the
instance.

In this paper, we introduce a general model of gadgets that naturally arises
from \emph{single-agent motion planning problems}, where a single agent/robot
traverses a given environment from a given start location to a given goal
location.  Our model is motivated by the plethora of existing hardness proofs
for such problems, such as Push-$1$, Push-$*$, PushPush, and Push-X
\cite{Demaine-Demaine-Hoffmann-O'Rourke-2003};
Push-$2$-F \cite{Demaine-Hearn-Hoffman-2002};
Push-$1$ Pull-$1$ \cite{us,Pereira-Ritt-Buriol-2016};
as well as several Nintendo video games studied at recent FUN conferences
\cite{Nintendo_TCS,Mario_FUN2016}.

\subsection{Gadget model}

In general, we model a \emph{gadget} as consisting of one or more
\emph{locations} (entrances/exits) and one or more \emph{states}.
(In this paper, we will focus on gadgets with at most two states.)
Each state $s$ of the gadget defines a labeled directed graph on the locations,
where a directed edge $(a,b)$ with label $s'$ means that the robot
can enter the gadget at location $a$ and exit at location~$b$, and that such a
traversal forcibly changes the state of the gadget to~$s'$.
Equivalently, a gadget is specified by its \emph{state space},
a directed graph whose vertices are state/location pairs,
where a directed edge from $(s,a)$ to $(s',b)$ represents that the robot
can traverse the gadget from $a$ to $b$ if it is in state~$s$,
and that such traversal will change the gadget's state to~$s'$.
Gadgets are \emph{local} in the sense that traversing a gadget does
not change the state of any other gadgets.

A \emph{system of gadgets} consists of gadgets, their initial states, and
\emph{connections} between disjoint pairs of locations (forming a matching).
If two locations $a,b$ of two gadgets (or the same gadget) are connected,
then the robot can traverse freely between $a$ and~$b$
(outside the gadgets).
(Equivalently, we can think of locations $a$ and $b$ as being identified.)
These are all the ways that the robot can move: exterior to gadgets
using connections, and traversing gadgets according to their current states.
In a \emph{puzzle}, we are given a system of gadgets, the robot starts at
a specified start location, and we want to find a sequence of moves that
brings the robot to a specified goal location.
The main problem we consider here is the obvious decision problem:
is the given puzzle solvable?

\begin{wrapfigure}{r}{1.25in}
  \centering
  \vspace*{-4ex}
  \includegraphics[width=\linewidth]{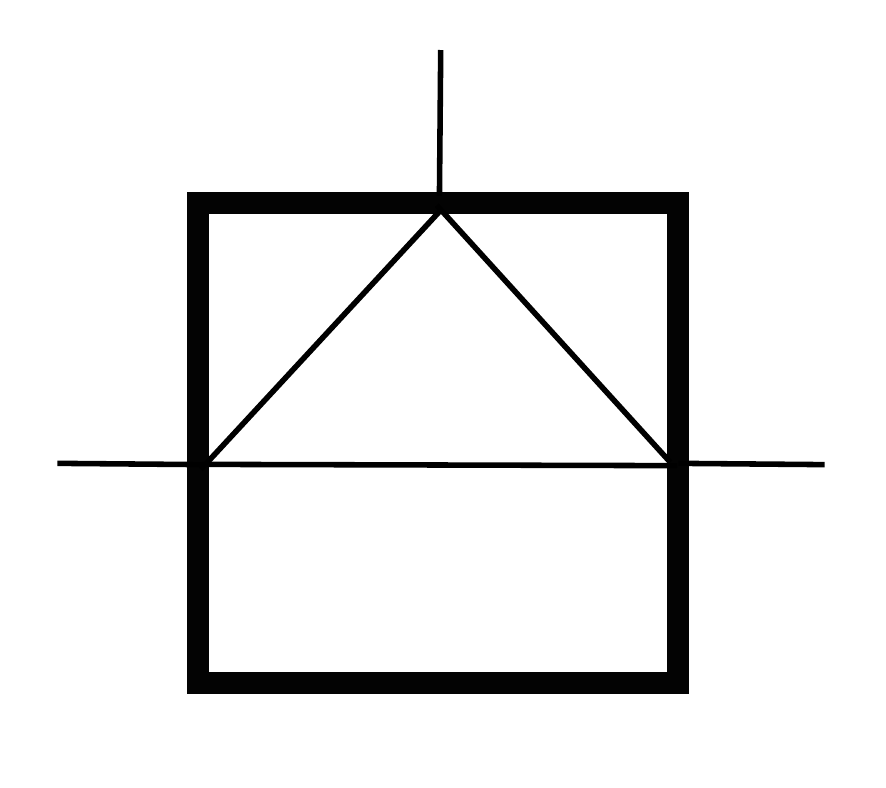}
  \vspace*{-5ex}
  \caption{Branching hallway gadget}
  \vspace*{-3ex}
  \label{branching hallway}
\end{wrapfigure}

One type of gadget we always allow in this paper
is the \textbf{branching hallway} gadget, which has one state and three
locations, and always allows traversal between all pairs of locations;
see Figure~\ref{branching hallway}.
In other words, upon reaching such a gadget, the robot is free to choose
and move to any of the three locations.
Connecting together multiple branching hallways allows us to effectively
connect the other gadgets' locations according to an arbitrary forest
(as described in the abstract).

All other gadgets we consider in this paper are ``deterministic'' and
``reversible''.
A gadget is \emph{deterministic} if its state space has maximum out-degree
$\leq 1$, i.e., a robot entering the gadget at some location $a$ in some
state $s$ (if possible) can exit at only one location~$b$ and one new state~$s'$.
A gadget is \emph{reversible} if its state space has the reverse of every edge,
i.e., it is the bidirectional version of an undirected graph.
Thus a robot can immediately undo any gadget traversal.%
\footnote{This notion is different than the sense of ``reversible''
  in reversible computing, which would mean that we could derive
  which move to undo from the current state.}
Together, determinism and reversibility are equivalent to requiring that
the state space is the bidirectional version of a matching.

Other than the (one-state) branching hallway, we further require that the
states of a gadget differ only in their orientations of the possible
traversals.  More precisely, a \emph{$k$-tunnel} gadget has $2k$ locations,
paired in a perfect matching whose pairs are called \emph{tunnels}, such that
each state defines which direction or directions each tunnel can be traversed.

We also consider \emph{planar} systems of gadgets, where the gadgets and
connections are drawn in the plane without crossings.  Planar gadgets are
drawn as small regions (say, disks) with their locations as points
in a fixed clockwise order along their boundary.  A single gadget type thus
corresponds to multiple planar gadget types, depending on the choice of the
clockwise order of locations.  Connections are drawn as paths connecting
the points corresponding to the endpoint locations, without crossing
gadget interiors or other connections.

\subsection{Our results}

We characterize the computational complexity of deciding puzzle solvability
when the allowed gadgets consist of the branching hallway and any number of
deterministic reversible $\leq 2$-state $k$-tunnel gadgets, for any $k$.
Specifically, if there is at least one gadget type that is not equivalent to
a $1$-state or $1$-tunnel gadget, then the problem is PSPACE-complete;
and otherwise, the problem is in~P.
The same characterization holds for planar systems of gadgets; thus,
in applications, we do not have to worry about building a crossover gadget
(which is often the most difficult).

In Section~\ref{sec:pspace}, we sketch our proof from \cite{us} that motion planning with two-toggle-locks and crossovers is PSPACE-complete. In Section~\ref{sec:AP2T}, we prove that one particular gadget, the antiparallel two-toggle, can simulate a variety of other gadgets, eventually including a two-toggle-lock and a crossover. As a consequence, motion planning with the antiparallel two-toggle is PSPACE-complete. In Section~\ref{sec:everything}, we show that all nontrivial deterministic reversible $2$-state, $2$-tunnel gadgets can simulate the antiparallel two-toggle. As a consequence, each corresponding motion planning problem is PSPACE-complete. In Section~\ref{sec:general}, we extend these results to give a precise hardness characterization for the motion planning problem with each deterministic reversible $2$-state $k$-tunnel gadget.

We also partially characterize the computational complexity of
deterministic reversible $\leq 2$-state gadgets with three locations.
In particular, we study spinners and deterministic forks,
as described in Section~\ref{sec:applications}.

We hope that our approach will be useful for establishing hardness of
many real-world motion planning problems and puzzles.
As sample applications, our results allow us to establish a new PSPACE-hard
aspect of the Nintendo video game \emph{Zelda: Oracle of Seasons}
(which features spinners),
and to provide alternate proofs of hardness results for Push-1 Pull-1,
as described in Section~\ref{sec:applications}.

\section{Gadget Basics}
\label{sec:basics}
To categorize the possible deterministic reversible $2$-state $2$-tunnel gadget
types, we first categorize the possible tunnel types in such a gadget.
A tunnel is \emph{trivial} if it is either never traversable or always traversable.
A trivial tunnel can always be split into a separate $1$-state $1$-tunnel
gadget, so we can ignore them.
What remain are three possible \emph{nontrivial} tunnel types:
\bigskip

\begin{tabular}{llp{0.6\textwidth}}
  $\vcenter{\hbox{\includegraphics[width=0.15\textwidth]{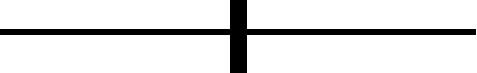}}}$ &
  \bf Tripwire &
  A tunnel that can always be traversed in either direction,
  but traversing it switches the gadget's state.
  \\[1ex]
  $\vcenter{\hbox{$\includegraphics[width=0.15\textwidth]{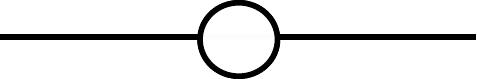}\atop
  \includegraphics[width=0.15\textwidth]{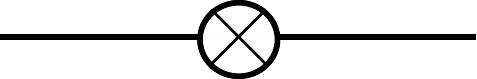}$}}$ &
  \bf Lock &
  In the \emph{unlocked} state (shown above), the tunnel can be traversed
  in either direction; in the \emph{locked} state (shown below),
  the tunnel cannot be traversed in either direction.
  \\[1ex]
  $\vcenter{\hbox{\includegraphics[width=0.15\textwidth]{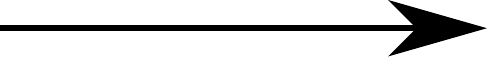}}}$ &
  \bf Toggle &
  A tunnel that can always be traversed in a single direction,
  where the direction differs in the two states of the gadget. The state is switched when the gadget is traversed.
\end{tabular}

\bigskip

There are six ways to combine these tunnel types into pairs.
Two combinations, Lock--Lock and Tripwire--Tripwire, are trivial combinations
equivalent to one-state gadgets in which each tunnel is either always
traversable in both directions or never traversable.
Thus we restrict our attention to the four other combinations, listed below. 
Because we are interested in planar systems,
we consider the multiple planar gadgets for each nontrivial combination.
(We do, however, treat a gadget and its reflection as equivalent.)
As a result, there are nine different nontrivial two-tunnel two-state gadgets,
abbreviated and listed below.
The bulk of our paper focuses on the six gadgets shown in
Figure~\ref{fig:GadgetDiagrams}, which omits most crossing variants.

\begin{enumerate}
    \item {\bf Tripwire--Lock}: Traversing the tripwire makes the other tunnel flip between being passable and impassable, causing it to `lock' or `unlock'. There are crossing and non-crossing varieties, abbreviated {\bf CWL} (crossing wire lock) and {\bf NWL} (non-crossing wire lock).
    \item {\bf Toggle--Lock}: Traversing the toggle flips the lock tunnel between being passable and impassable. Crossing the lock tunnel, by definition, does not change the state of the gadget. Notice that one direction of the toggle corresponds to an open lock and the other direction to the closed lock. There are crossing and non-crossing varieties, abbreviated {\bf CTL} (crossing toggle lock) and {\bf NTL} (non-crossing toggle lock).
    \item {\bf Tripwire--Toggle}: Here traversing either the tripwire or the toggle flips the direction of the toggle. There are crossing and non-crossing varieties, abbreviated {\bf CWT} (crossing wire toggle) and {\bf NWT} (non-crossing wire toggle).
    \item {\bf Toggle--Toggle}: Also known as a {\bf 2-toggle} \cite{us}. Traversing either toggle flips the direction of both of them. This is the only case where there are two directed tunnels, leading to three possibilities: crossing, parallel, and anti-parallel. They are abbreviated {\bf C2T} (crossing 2-toggle), {\bf P2T} (parallel 2-toggle), and {\bf AP2T} (anti-parallel 2-toggle).
\end{enumerate}

\begin{figure}[t]
  \centering
  \begin{subfigure}[b]{0.15\textwidth}
    \includegraphics[width=\textwidth]{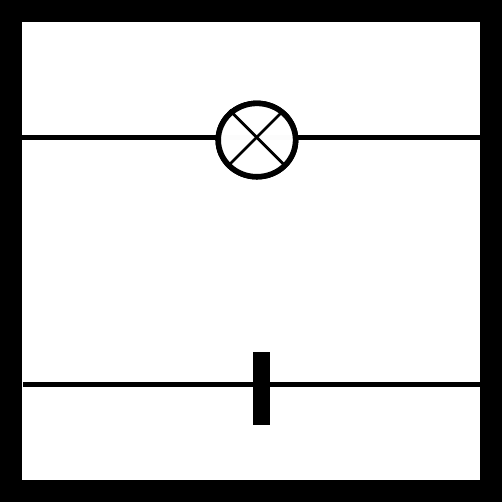}
    \caption{NWL}
    \label{fig:NWL}
  \end{subfigure}\hfill
  \begin{subfigure}[b]{0.15\textwidth}
    \includegraphics[width=\textwidth]{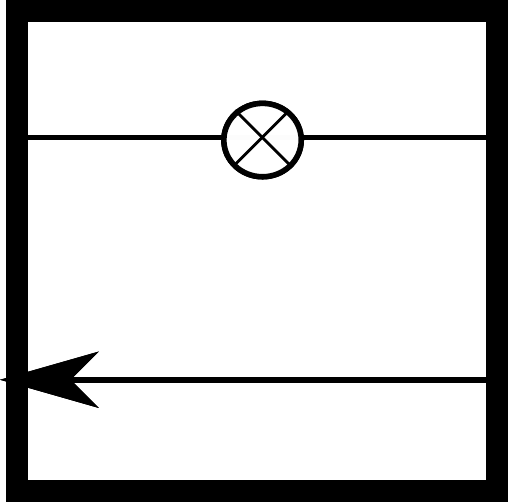}
    \caption{NTL}
    \label{fig:NTL}
  \end{subfigure}\hfill
    \begin{subfigure}[b]{0.15\textwidth}
    \includegraphics[width=\textwidth]{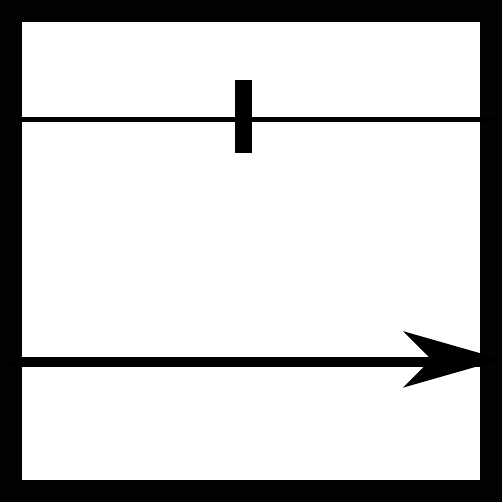}
    \caption{NWT}
    \label{fig:NWT}
  \end{subfigure}\hfill
  \begin{subfigure}[b]{0.15\textwidth}
    \includegraphics[width=\textwidth]{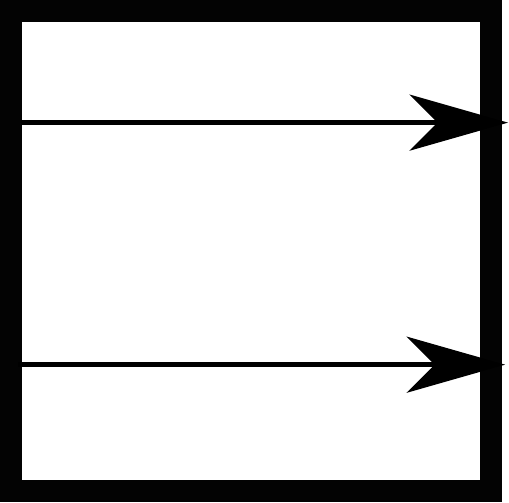}
    \caption{P2T}
    \label{fig:P2T}
  \end{subfigure}\hfill
  \begin{subfigure}[b]{0.15\textwidth}
    \includegraphics[width=\textwidth]{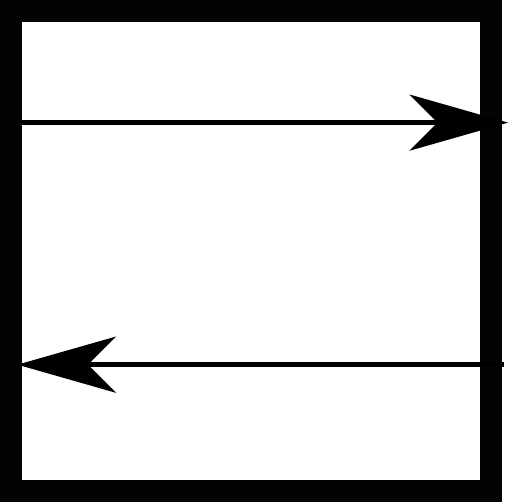}
    \caption{AP2T}
    \label{fig:AP2T}
  \end{subfigure}\hfill
    \begin{subfigure}[b]{0.15\textwidth}
    \includegraphics[width=\textwidth]{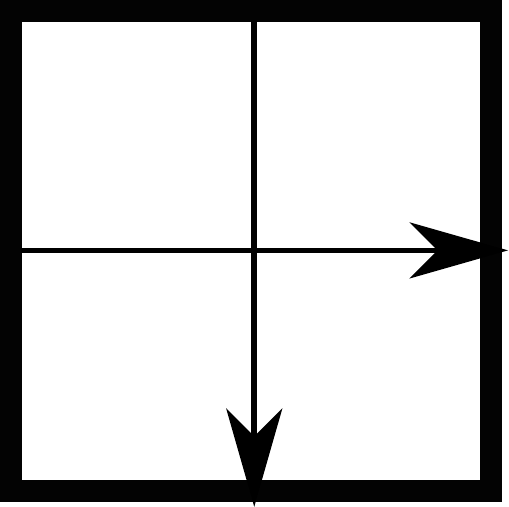}
    \caption{C2T}
    \label{fig:C2T}
  \end{subfigure}\hfill
  \caption{Six of the nine deterministic reversible 2-state gadgets on two tunnels. We leave out the CWL, CTL, and CWT gadgets as they are not heavily used in the paper.}
\label{fig:GadgetDiagrams}
\end{figure}

In this paper we will often need to discuss putting gadgets together to create new behavior. We will do so by creating a system of gadgets that is ``equivalent'' to some target gadget, thereby ``simulating'' that gadget. Two systems of gadgets are \emph{equivalent} if there is a bijective correspondence between their locations and a correspondence between their states such that the allowed transitions for all (locations, state) pairs are the same under these two correspondences. We will say that a gadget or set of gadgets \emph{simulates} a target gadget if it is possible to combine gadgets from the set (possibly using duplicates) such that the resulting system is equivalent to the target gadget. We will always implicitly allow the use of the branching hallway gadget in these constructions. In all cases, these constructions will be planar.

\subsection{Closure Properties}
\label{sec:closure}

In this section we show that systems of (deterministic) reversible gadgets remain (deterministic) and reversible. Such compositionality properties of gadgets are interesting and will simplify several later proofs.

\begin{lemma}
\label{lem:undirected_composition}
Any system of gadgets composed of two reversible gadgets is reversible.
\end{lemma}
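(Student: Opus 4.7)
The plan is to unfold any transition of the combined system into its underlying sequence of individual gadget traversals and then reverse that sequence move-by-move, invoking reversibility of each constituent gadget. Throughout, I would view the combined system as a gadget whose external locations are the locations not matched by the internal connections, and whose states are pairs $(s_1, s_2)$ ranging over the state spaces of the two constituent gadgets.

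First, I would set up notation for a transition of the combined system. Such a transition is a directed edge from $((s_1, s_2), a)$ to $((s_1', s_2'), b)$ in the combined state space, realized by a finite sequence of moves of the robot that begins when it enters an external location $a$ in joint state $(s_1, s_2)$ and ends when it exits at an external location $b$ in joint state $(s_1', s_2')$. Each move is either a traversal along a connection (which changes no gadget's state and is trivially bidirectional) or a traversal of one of the two gadgets (which changes only that gadget's state according to its local state space).

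Next, the main step: because each constituent gadget is reversible, every individual gadget traversal $(s_i, x) \to (s_i', y)$ comes paired with an inverse traversal $(s_i', y) \to (s_i, x)$ in its state space. Given any realizing sequence $m_1, m_2, \ldots, m_k$ of the original transition, I would show that the reversed sequence $m_k^{-1}, \ldots, m_2^{-1}, m_1^{-1}$ is a valid robot trajectory: the joint state and the robot's location immediately before $m_j^{-1}$ in the reversed trajectory match those immediately after $m_j$ in the original, so legality of each step follows from reversibility of the gadget it acts on (and trivially for connections). This reversed trajectory carries the robot from $b$ in joint state $(s_1', s_2')$ back to $a$ in joint state $(s_1, s_2)$, exhibiting the reverse edge in the combined state space.

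The only place to be careful is the bookkeeping of the joint state between moves, since the robot may bounce between the two gadgets many times before exiting; once the definition of a combined transition as a finite sequence of legal individual moves starting and ending at external locations is pinned down, the inversion argument is routine. I do not anticipate a serious obstacle, and the same argument extends inductively to systems composed of any number of reversible gadgets.
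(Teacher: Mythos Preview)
Your proposal is correct and follows essentially the same approach as the paper: decompose a transition of the composed system into a walk of individual gadget traversals and connection moves, then reverse that walk step-by-step using reversibility of each constituent gadget. The paper's proof is terser, but the underlying argument is identical to yours.
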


\begin{proof}
Consider any transition through the system formed by composing two reversible gadgets. This transitions is a walk through the gadgets and connections that form a system. Since both gadgets are reversible, it is possible for the robot to enact the exact reverse of this walk after the walk is done. This will exactly reverse the effect of the walk within each gadget. Thus, it is possible to reverse the entire transition.

Since every transition of the system can be reversed, the system is reversible.
\end{proof}

Since all of the gadgets we consider in this paper are reversible, Lemma~\ref{lem:undirected_composition} means our systems will all be reversible as well. 

\begin{lemma}
  \label{lem:bijective_composition}
  Any system of gadgets composed of two deterministic reversible gadgets is deterministic and reversible.
\end{lemma}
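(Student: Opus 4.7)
Reversibility is immediate from Lemma~\ref{lem:undirected_composition}, so the real work is to verify determinism, i.e.\ that from any (external location, joint state) pair of the composed system there is at most one possible outgoing transition. My plan is to reduce this to a basic graph‑theoretic observation by building an auxiliary undirected graph $G$ whose vertices are all pairs $(\ell, S)$, with $\ell$ ranging over every location of either gadget (internal or external to the composition) and $S$ ranging over joint states. I put two kinds of edges in $G$: a \emph{traversal edge} $\{(\ell, S), (\ell', S')\}$ whenever the gadget containing $\ell$ permits traversal from $\ell$ to $\ell'$ taking its own state from the corresponding component of $S$ to that of $S'$; and a \emph{connection edge} $\{(\ell, S), (\ell', S)\}$ whenever $\ell$ and $\ell'$ are joined by one of the internal connections (the joint state being unchanged). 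Reversibility of each individual gadget is exactly what makes the traversal relation symmetric, so $G$ is genuinely undirected.

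The key observation is that $G$ has maximum degree $\leq 2$. Indeed, the paper already notes that a deterministic reversible gadget has state space equal to the bidirectional version of a matching, so each $(\ell, S)$ is incident to at most one traversal edge; and since the internal connections form a matching on locations, each $(\ell, S)$ is incident to at most one connection edge. Moreover, a vertex of the form (external location, $S$) has no connection edge incident to it by definition of ``external,'' so such vertices have degree at most $1$ in $G$.

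A graph of maximum degree $\leq 2$ is a disjoint union of paths and cycles, and any degree-$\leq 1$ vertex must sit at an endpoint of a path component. Hence every (external location, $S$) vertex is an endpoint of some path component, and each path component has at most two such external endpoints. Now any sequence of moves the robot can make starting from an entry configuration $(\ell_0, S_0)$ corresponds to a walk in $G$ starting at $(\ell_0, S_0)$, and such a walk cannot leave the path component containing $(\ell_0, S_0)$. Therefore the only (external location, $S')$ configuration reachable as an exit is the other endpoint of this path component, if that endpoint is also external — a unique choice. This shows out-degree $\leq 1$ in the state space of the composed system, so the composition is deterministic.

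The main subtlety I expect to need to argue carefully is the last correspondence: a priori the robot may backtrack, take side excursions into locked dead ends, or revisit configurations, so I need to make clear that none of these can produce a second exit. But because the component of $(\ell_0, S_0)$ is a path and $(\ell_0, S_0)$ is one of its endpoints, any motion of the robot is confined to this path and can only terminate at an external location at the other endpoint; so backtracking and wandering are harmless, and the unique exit (if any) is forced. Combined with reversibility from Lemma~\ref{lem:undirected_composition}, this establishes the lemma.
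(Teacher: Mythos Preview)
Your argument is correct and is essentially the same as the paper's: both build an undirected graph on (location, state) pairs with traversal edges and connection edges, observe that every vertex has degree at most~$2$ while external vertices have degree at most~$1$, and conclude that external entries lie on simple paths with at most one other external endpoint. Your write-up is more explicit about the joint state and about why backtracking cannot produce a second exit, but the underlying idea is identical.
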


\begin{proof}
  The state space of a reversible, deterministic gadget is an undirected matching of some (state, location) pairs to each other. This a necessary and sufficient characterization of reversible, deterministic gadgets.

  When we compose two such gadgets, we create paths through the pair of gadgets. However, no (state, location) pair has more than two edges: One connection to the other gadget, and one edge through its original gadget. Moreover, any (state, location) pair that forms an external location has a most one edge, as it does not connect to the other gadget. As a consequence, the path from any external location through the gadget is either a deterministic path to another external location, or a dead end. There is no branching, as branching would require a location with three edges.

  Thus, the resultant object is deterministic. By Lemma~\ref{lem:undirected_composition} it is reversible as well.
\end{proof}

\subsection{PSPACE Membership}

\begin{lemma}
\label{lem:in-pspace}
  Deciding puzzle solvability is in PSPACE.
\end{lemma}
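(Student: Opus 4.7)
The plan is to exhibit the puzzle-solvability problem as a reachability problem on an implicit graph whose vertices admit a polynomial-size encoding and whose edges are polynomial-time checkable, and then invoke Savitch's theorem.

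First, I would define a \emph{configuration} of a puzzle to consist of (a) the current state of each gadget and (b) the current location of the robot (or, equivalently, the connected component in the connection graph that the robot currently occupies, since the robot can move freely along connections and branching hallways). Each gadget has boundedly many states (at most two in the setting of this paper, and in general bounded by the gadget type, which is part of the input), there are polynomially many gadgets, and there are polynomially many locations, so a configuration can be written down in $O(n)$ bits where $n$ is the size of the input.

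Next, I would observe that the transitions are local and cheap to check. From a configuration $C$, the set of configurations reachable in one ``move'' is obtained by (i) letting the robot travel freely through the always-traversable subgraph induced by connections and branching hallways to any location in its component, and then (ii) picking a gadget traversal permitted by the gadget's current state and updating only that gadget's state and the robot's location. Both computing the component and enumerating allowed traversals take polynomial time given the configuration.

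With these two observations the PSPACE algorithm is standard: nondeterministically guess a move, update the configuration in place, and accept as soon as the robot occupies the goal location. Since configurations are polynomial-size and each move is polynomial-time verifiable, this is a nondeterministic polynomial-space procedure, so by Savitch's theorem $\mathrm{NPSPACE}=\mathrm{PSPACE}$ we conclude that deciding puzzle solvability is in PSPACE.

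There is no real obstacle here; the only point worth attending to is being explicit that the gadget type, and hence its state-space description, is part of the input (so we can decide in polynomial time which traversals are currently legal at each gadget), and that the branching hallways can be collapsed into an always-available connectivity structure so that the ``current location'' need only be tracked up to the component of the connection graph. Once this bookkeeping is in place, the PSPACE bound is immediate.
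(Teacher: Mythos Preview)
Your proposal is correct and follows essentially the same approach as the paper: encode a configuration as the states of all gadgets plus the robot's location, observe this is polynomial-size with polynomially checkable transitions, and apply Savitch's theorem to the resulting nondeterministic polynomial-space search. Your version adds a few more implementation details (component collapsing, explicit local move structure), but the core argument is identical.
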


\begin{proof}
The entire state of the system can be described by the current state of the gadgets and the location of the agent. The gadgets have a polynomial number of states and there can only be a polynomial number of gadgets. Since the entire state of the board fits in a polynomial amount of space, we can non-deterministically search for a solution, showing containment in NPSPACE. Savich's Theorem\cite{SAVITCH1970177} gives PSPACE${}={}$NPSPACE.
\end{proof}

\section{2-toggle-lock and crossover motion planning is PSPACE-complete}
\label{sec:pspace}
In \cite{us} we showed that motion planning with 4-toggles and crossovers is PSPACE-complete. In that construction, the crucial gadget turned out to be a 2-toggle-lock, which is a 3-tunnel, 2-state gadget with two locks and a tunnel. The 4-toggle was not used in any way after the construction of the 2-toggle-lock, showing that 2-toggle-locks and crossovers are PSPACE-hard. For convenience we sketch the proof, with some refinement. One should refer to the prior paper for a more detailed and rigorous proof.

\begin{definition}
3QSAT is the following decision problem. Given a fully quantified boolean formula in prenex normal form and in conjunctive normal form with no more than three variables per clause, decide whether the formula is true.
\end{definition}

\begin{theorem}
\label{thm:qsat}
    Motion planning with 2-toggle-locks and crossovers is PSPACE-hard.
\end{theorem}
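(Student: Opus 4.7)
The plan is to reduce from 3QSAT. Given a prenex 3CNF formula $Q_1 x_1 \cdots Q_n x_n \, \phi$, I would build a motion planning instance whose goal is reachable iff the formula is true. The instance has three parts wired together via crossovers: a collection of variable gadgets, a quantifier-processing region, and a clause-checking region.

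Each variable $x_i$ is represented by a 2-toggle-lock whose state encodes the current value of $x_i$. The tripwire-style tunnel of the gadget serves as a ``setter''---traversing it toggles $x_i$---while the two lock tunnels represent the atomic propositions ``$x_i = \text{true}$'' and ``$x_i = \text{false}$'', with exactly one of them open at a time. Because a literal can appear in many clauses, I would fan out these locks using a chain of additional 2-toggle-locks. The clause-checking region is a sequence of clause gadgets, each a three-way branching passage whose branches are guarded by the corresponding literal's fanned-out lock, so a clause is passable precisely when at least one of its literals is currently true.

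The quantifier region forces the robot to process $Q_1, \ldots, Q_n$ in order. For $\exists x_i$, the robot is given a single opportunity to flip $x_i$'s setter and is then funneled into the block for $Q_{i+1}$. For $\forall x_i$, I would build a larger subroutine that the robot cannot exit until it has run the remainder of the formula for \emph{both} values of $x_i$: enter with $x_i$ in some value, descend into $Q_{i+1}$'s block, return, flip $x_i$, descend again, and only then leave. Enforcing this two-pass pattern requires extra 2-toggle-locks acting as phase counters that block the exit until both passes have successfully reached the end of the clause-checking region.

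The hard part will be the soundness argument, particularly for the universal subroutine. Because every gadget in the construction is reversible (Lemma~\ref{lem:undirected_composition}), the robot can freely backtrack; I must argue that any deviation from the intended execution sequence is reversible with no net progress toward the goal, and that any completed traversal necessarily witnesses the QSAT semantics. Crossovers handle the inevitable wire crossings between variable locks and their occurrences in distant clauses. The refinement over the construction in \cite{us} is that these subroutines are built directly from 2-toggle-locks and crossovers, with no need to first realize them from 4-toggles; membership in PSPACE then follows from Lemma~\ref{lem:in-pspace}, so only hardness needs to be argued in detail.
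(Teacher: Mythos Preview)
Your overall plan---reduce from 3QSAT, encode literals as locks opened/closed by a variable-setter pathway, check clauses via three-way branches through those locks---matches the paper's. Two points deserve attention.

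First, your description of the 2-toggle-lock is inverted. The gadget has \emph{two toggle} tunnels and \emph{one lock} tunnel, not one tripwire and two locks. In the paper's construction the toggle tunnels form the ``setter'' pathway (traversing either toggle flips the state and hence the lock), and the single lock tunnel is the ``checker''. A variable with several literal occurrences is therefore represented not by one gadget with several locks but by a chain of 2-toggle-locks, one per occurrence, with their toggle tunnels strung in series; your ``fan out'' idea is essentially this, but be sure the tunnel roles are straight.

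Second, your universal-quantifier mechanism is genuinely different from the paper's. You propose a nested-recursion scheme: at each $\forall x_i$, descend, return, flip $x_i$, descend again, then exit. The paper instead places the universal variables in a chain that behaves as a binary counter: each pass through the chain increments the counter and sends the robot back to re-set the existentials and re-traverse the formula, with the goal reachable only after all $2^k$ settings have been visited. The anti-cheating device is a single lock on the return path after each universal, passable only when that variable is currently false; because the counter just incremented, all earlier universals are false at that moment, so the robot can go back but cannot skip ahead. Your nested scheme can be made to work with reversible gadgets, but the ``phase counters'' you allude to are exactly the delicate part: you will need a concrete device, comparable in force to the paper's return-path lock, that prevents the robot from exiting a $\forall$ block after only one successful descent or from flipping $x_i$ without performing the second descent. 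Without that, the soundness argument you flag as ``the hard part'' does not yet go through.
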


We reduce from 3QSAT to motion-planning with 2-toggle-locks and crossovers. To do so we need to construct clauses, universal variables, and existential variables. Literals will consist of a 2-toggle-lock which will be set from the 2-toggle side and checked by passing through the lock. Clauses are composed of a branching hallway that leads through each of its associated literals.

Existential variables will be a branching hall with a group of toggle-locks in series. Passing through in one direction opens the locks of the gadgets representing true literals of that variable while closing the locks of the false ones. Going through the other way allows this to be undone, as the system is reversible.

 To construct universal quantifiers we connect up the 2-toggle sections as in Figure~\ref{fig:QSAT}, where each universal gadget consists of several antiparallel 2-toggles with locks. Each of these gadgets sends the robot forward in one state or back to the beginning in the other state, and flips the state. Repeatedly entering from the left iterates through all configurations of the states, so the robot must check all of the possible values for the universal variables. The goal state lies at the far end of the eries of universal gadgets.

For both the existentials and the universals, the variables are actually a long series of 2-toggle-locks with one lock for each literal of the variable in the formula.

When putting this all together, as in Figure~\ref{fig:QSAT}, we need to ensure that the robot cannot sneak back into the variable gadget and change existential settings it shouldn't be allowed to access, namely those existentials beyond the universal it just emerged from. To do this we construct a simple system that puts a lock on the return pathway at the end of each universal variable which only allows passage if the prior variable is set to false. Since the robot will have just exited from a variable which was set to true, this prevents the robot from moving forward in the variable chain. In addition, all earlier variables are false allowing the robot to travel back to the formula, since the universal gadgets take on incrementing binary values with each loop through the gadget. Since those existential variables are ones the robot was allowed to set to any value on the prior passage, going back and changing them now gives no advantage over having set them to that value earlier.

This safeguard is the one difference from the prior construction, which checked the values of all prior universal variables, requiring a quadratic blow-up in number of gadgets. The need for crossovers and a 2D layout will still create a quadratic blowup in problem size overall, but this simplification seemed worth noting and should allow for the 3D result to cause only a linear blowup in problem size.

With this guard in place, the robot can only reach the goal state by demonstrating a solution to the 3QSAT instance, after iterating through all settings of the universal gadget. \qed

\begin{figure}
\centering
\includegraphics[width=0.95\textwidth]{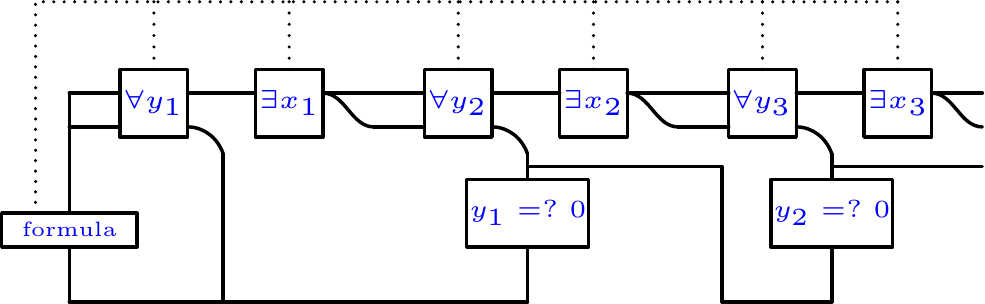}
\caption{Structure of the QSAT reduction.}
\label{fig:QSAT}
\end{figure}

\section{Antiparallel 2-toggle motion planning is PSPACE-complete}
\label{sec:AP2T}
We will show that the question of whether a robot in a system of antiparallel 2-toggle gadgets can reach a specified goal location is PSPACE-complete. To do so, we will simulate various other gadgets using AP2T gadgets, eventually simulating 2-toggle-locks and crossovers. Since motion planning with 2-toggle-locks and crossovers is PSPACE-complete, this implies that AP2T motion planning is PSPACE-complete. 

\begin{theorem}
    \label{thm:AP2T-complete}
    Motion Planning with AP2T gadgets is PSPACE-complete.
\end{theorem}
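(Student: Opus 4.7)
The plan is to establish PSPACE-completeness in two directions. PSPACE membership follows immediately from Lemma~\ref{lem:in-pspace}, since an AP2T system has polynomially many gadgets each with only two states. The substance of the theorem is PSPACE-hardness, and for that I will reduce from the 2-toggle-lock-with-crossover motion planning problem shown PSPACE-hard in Theorem~\ref{thm:qsat}. Concretely, I aim to exhibit planar systems of AP2Ts that are equivalent, in the strict sense defined in Section~\ref{sec:basics}, to a 2-toggle-lock and to a crossover. Once both simulations exist, substituting them into the reduction of Theorem~\ref{thm:qsat} yields a polynomial-time reduction from 3QSAT to AP2T motion planning.

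The simulations are unlikely to be achievable in one shot, so I would proceed through a ladder of intermediate gadgets, each simulated from the previous. A natural sequence would start from simple primitives, such as a single toggle or a lock obtained by wiring two AP2Ts so that certain tunnels are usable only in a fixed combined state, and then build successively more complex behaviors such as a tripwire--lock, a toggle--lock, and finally a 2-toggle-lock. The crossover is a one-state gadget and should fall out of a short direct construction, or emerge as a byproduct of one of the intermediate constructions. By Lemma~\ref{lem:bijective_composition}, every composite along the ladder is automatically deterministic and reversible, so the verification burden reduces to checking that the induced state graph matches the intended target up to the bijection on (state, location) pairs. Each construction can be kept planar because the paper already restricts attention to planar compositions.

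The main obstacle will be verifying \emph{equivalence} rather than merely that the intended moves are available. Because the AP2T flips its internal state on every traversal, any candidate construction admits many ``sneak'' paths in which the robot threads through the internal AP2Ts in unintended orders, potentially leaving the composite in an internal configuration that has no analog in the target gadget's state space. To rule out spurious transitions I would exploit reversibility via Lemma~\ref{lem:undirected_composition}: it suffices to enumerate, for each abstract state and each pair of external locations, all external-to-external walks through the composite and check that they induce exactly the transitions of the target gadget. I expect the 2-toggle-lock simulation to be the hardest step, since it has three tunnels with asymmetric behaviors and requires coordinating two coupled locked tunnels with a single toggle; the crossover and the intermediate one-tunnel gadgets should be comparatively routine. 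Combining these simulations with Theorem~\ref{thm:qsat} gives PSPACE-hardness, and together with Lemma~\ref{lem:in-pspace} this yields PSPACE-completeness of AP2T motion planning.
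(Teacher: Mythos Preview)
Your proposal is correct and mirrors the paper's own proof: PSPACE membership via Lemma~\ref{lem:in-pspace}, and hardness by simulating a 2-toggle-lock and a crossover out of AP2Ts through a ladder of intermediate gadgets, then invoking Theorem~\ref{thm:qsat}. One minor recalibration of your expectations: in the paper the crossover is \emph{not} a short direct construction but sits at the end of the longest chain of simulations (AP2T $\to$ C2T/P2T $\to$ NTL $\to$ 2-toggle-locks $\to$ NWL $\to$ SWLW $\to$ crossover), while the 2-toggle-lock is obtained earlier along the way.
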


We will simulate the gadgets needed for the PSPACE-completeness proof, and a wide variety of other intermediate gadgets to help us get there. The steps are as follows:

\begin{enumerate}
    \item Simulate a C2T, using AP2Ts. Lemma~\ref{lem:C2T}.
    \item Simulate a P2T, using C2Ts. Lemma~\ref{lem:P2T}.
    \item Simulate a NTL, using AP2Ts, C2Ts and P2Ts. Lemma~\ref{lem:NTL-sim}.
    \item Simulate various types of 2-toggle locks, with ``round'' and ``stacked'' internal connections. The types of internal connections are described in Section~\ref{sec:2TL}, and the constructions are given in Lemmas \ref{lem:RP2TL} and \ref{lem:SAP2TL}.
    \item Simulate a NWL, using the stacked antiparallel 2-toggle lock. Lemma~\ref{lem:NWL}.
    \item Simulate a stacked tripwire-lock-tripwire, using NWLs. Lemma~\ref{lem:SWLW}
    \item Simulate a crossover, using stacked tripwire-lock-tripwires. Lemma~\ref{lem:crossover}
\end{enumerate}

With a 2-toggle lock and a crossover constructed, we can apply Theorem~\ref{thm:qsat} to show that motion planning with AP2Ts is PSPACE-hard. Adding in Lemma~\ref{lem:in-pspace}, we find that it is PSPACE-complete.\qed

\begin{lemma}
    \label{lem:C2T}
  Antiparallel 2-toggles (AP2Ts) simulate a crossing 2-toggle (C2T).
\end{lemma}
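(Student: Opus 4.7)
I plan to prove the lemma by explicit planar construction: compose two AP2T gadgets so that the resulting system, with four designated external locations, is equivalent to a single C2T. Writing the four locations of one AP2T as $x_1, x_2, x_3, x_4$ in clockwise order (with antiparallel tunnels $\{x_1,x_2\}$ and $\{x_3,x_4\}$), and similarly $y_1, y_2, y_3, y_4$ for a second AP2T, I would connect two locations of the first gadget to two locations of the second via branching hallways in such a way that every external-to-external traversal of the composite must pass through \emph{both} gadgets. This state-coupling forces the two AP2Ts to flip simultaneously, collapsing the four possible product states down to the two ``in-sync'' states that a C2T has.

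Given such a wiring, the next step is to tabulate, for each of the two in-sync states, the four available external transitions and verify that they match those of a C2T: two directed tunnels with both directions reversed whenever either is traversed. I would also check that no ``out-of-sync'' intermediate configuration (reached transiently between entering the first gadget and exiting the second) gives an externally-observable transition absent from a C2T. The directional constraints of the AP2Ts should make any attempted side-trip from an intermediate location either a dead-end or a simple reversal, so this verification should reduce to a small case analysis at each of the four external locations.

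The main obstacle is geometric. An AP2T has its tunnels in adjacent clockwise positions; a C2T has them in opposite (crossing) clockwise positions. Gluing two AP2Ts side by side in the most obvious way produces another AP2T, not a C2T, so the construction must route the internal connections (perhaps around the outside of one gadget, or with some additional branching hallways) so as to effectively swap what would otherwise be adjacent external locations. Since we do not yet have a crossover at this stage of the paper, this ``crossing via state-coupling'' is where the lemma does its real work: the crossing is produced by the logical structure of the AP2T transitions rather than by any genuine topological crossing of wires. I expect the correct wiring with two AP2Ts and a constant number of branching hallways to suffice; if not, the same idea should go through with three AP2Ts at the cost of a slightly larger, but still planar, construction.
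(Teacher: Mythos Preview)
Your proposal is correct and matches the paper's approach: the paper uses exactly two AP2Ts whose four non-external locations meet at a common central junction, so that every external-to-external traversal passes through one tunnel of each gadget and flips both, and then verifies by a short case analysis that in each of the two in-sync states the only traversals are the crossing pair (upper-left~$\leftrightarrow$~lower-right and upper-right~$\leftrightarrow$~lower-left). Two AP2Ts suffice and no clever routing is needed beyond the single shared center; your anticipated difficulty about producing a ``crossing via state-coupling'' is exactly what the antiparallel orientations accomplish at that center.
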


\begin{proof}
The construction is given in Figure~\ref{fig:AP2T-to-C2T}. In the state of the construction shown in the figure, there are two possible transitions: the robot can move from the upper left to the bottom right of the construction, or from the upper right to the bottom left. Either of those transitions toggles both AP2Ts, leaving the construction mirrored top to bottom. Thus, the construction has two states. The possible traversals in one state (as shown above) are from the top left to the bottom right and from the top right to the bottom left, while the possible traversals in the other state are (by symmetry) from the bottom left to the top right and from the bottom right to the top left. Following any of these traversals swaps the state of the construction. Notice that this is exactly the behavior of a C2T.

If the robot enters the construction shown from the upper left, upon reaching the center the robot can only proceed to the bottom right, or come back the way it came. Therefore, the upper left to bottom right transition is the only possible transition from that location. By symmetry, the same is true from top left to bottom right. Thus, the one traversal described for each location in each state is the only one possible.
\end{proof}

\begin{figure}
  \centering
  \begin{minipage}{0.45\textwidth}
    \centering
    \includegraphics[width=0.7\textwidth]{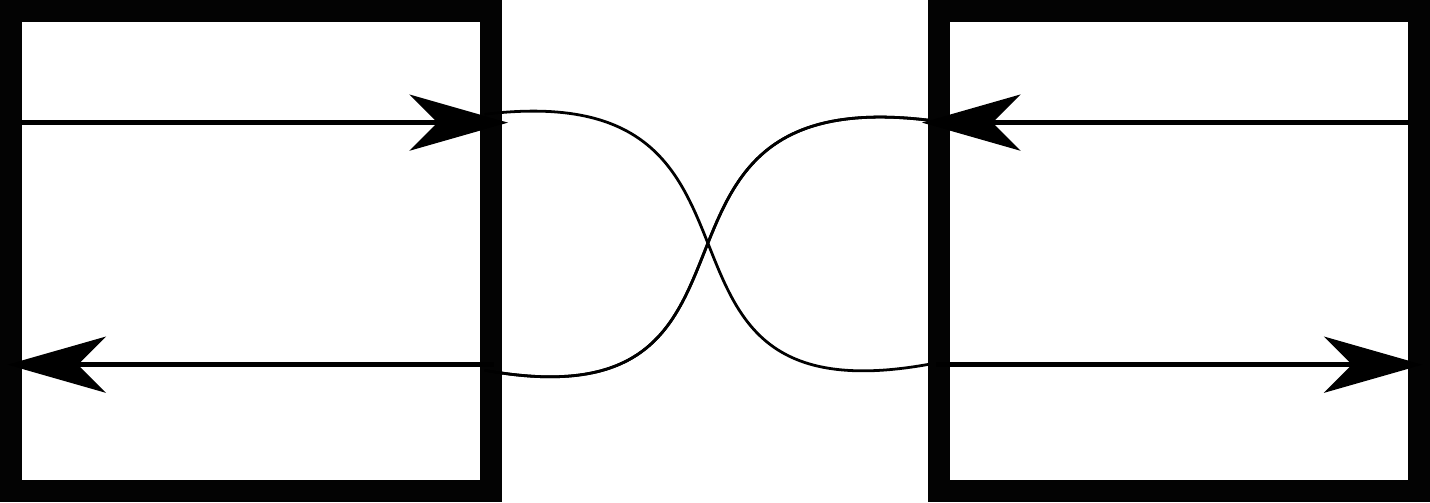}
    \caption{Anti-parallel 2-toggles simulate a crossing 2-toggle}
    \label{fig:AP2T-to-C2T}
    \centering
    \includegraphics[width=0.7\textwidth]{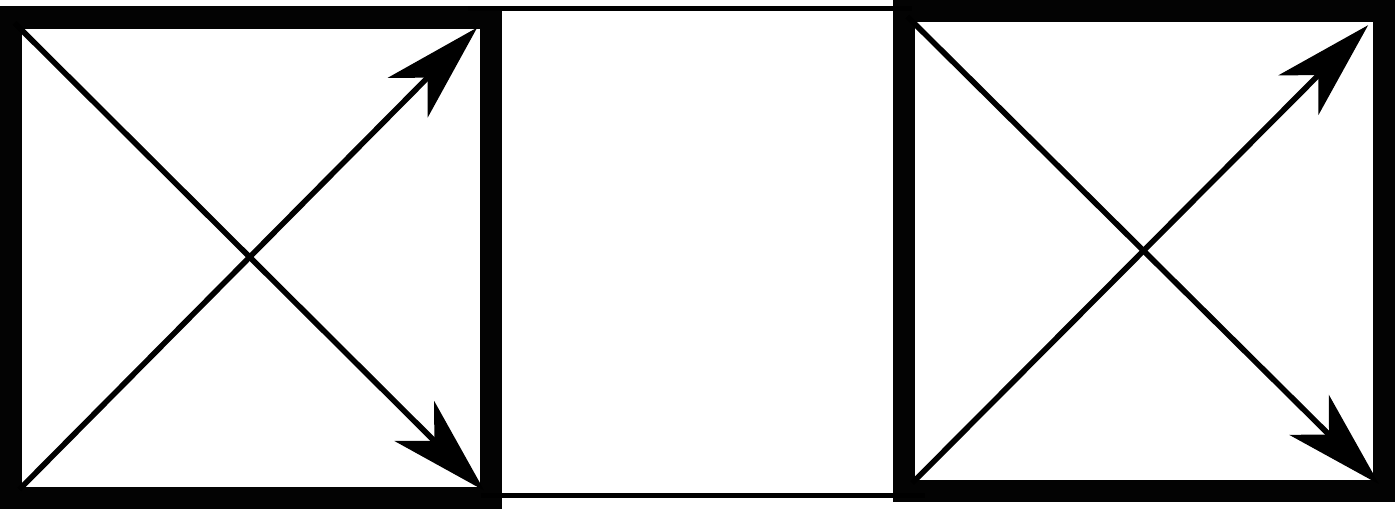}
    \caption{Crossing 2-toggles simulate a parallel 2-toggle}
    \label{fig:C2T-to-P2T}
  \end{minipage}\hfil\hfil
  \begin{minipage}{0.45\textwidth}
    \centering
    \includegraphics[width=0.9\textwidth]{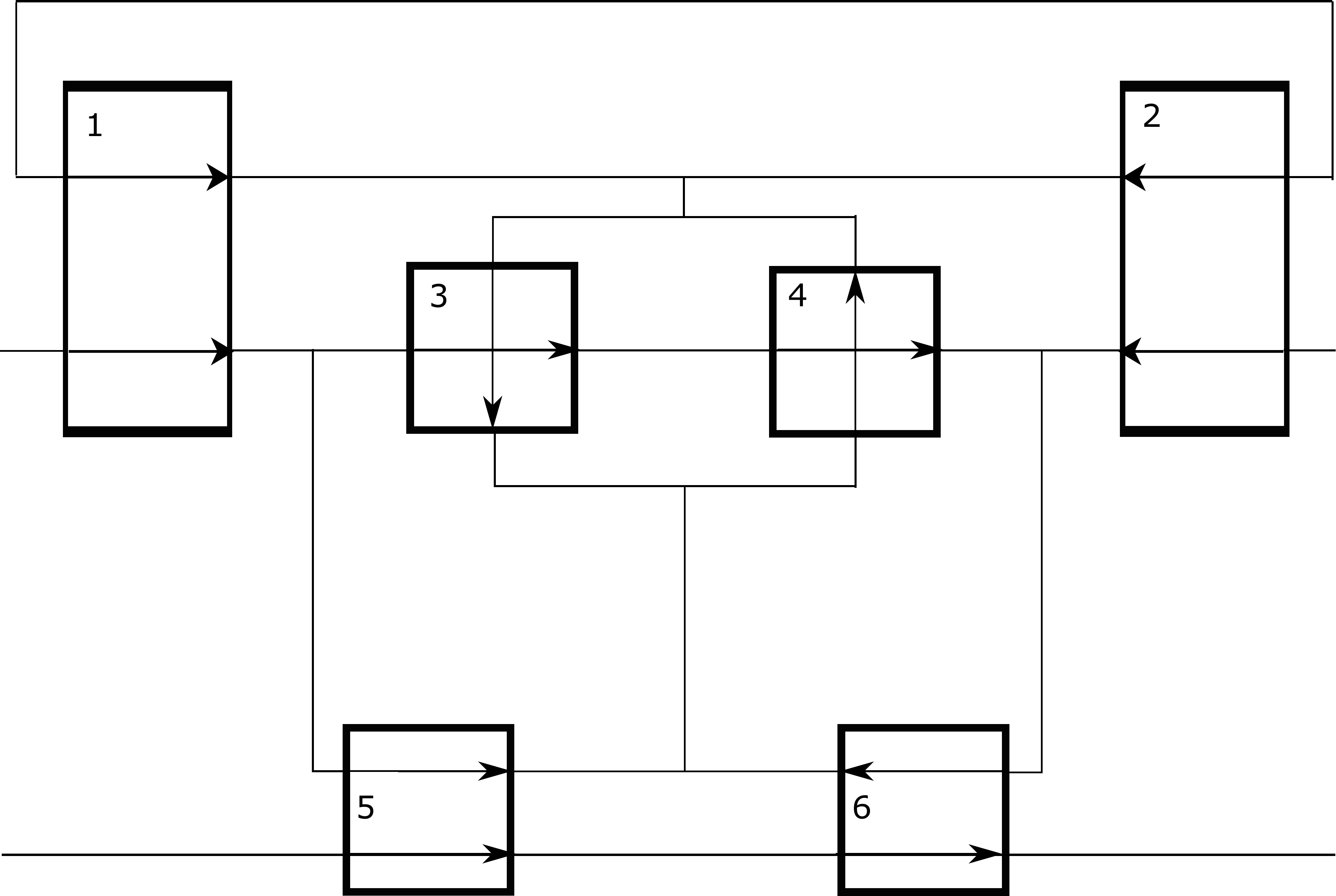}
    \caption{2-toggles simulate 1-toggle-lock.}
    \label{fig:2-toggle-to-1-toggle-lock}
  \end{minipage}
\end{figure}

\begin{lemma}
    \label{lem:P2T}
  Crossing 2-toggles (C2Ts) simulate a parallel 2-toggle (P2T).
\end{lemma}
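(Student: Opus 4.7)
The plan is to stack two C2Ts vertically so that their internal crossings combine to give parallel (uncrossed) external tunnels. Place C2T$_a$ on top and C2T$_b$ directly below it, and label the four locations of each clockwise from the NW corner as $a_1,a_2,a_3,a_4$ and $b_1,b_2,b_3,b_4$, so that $a_3,a_4$ face downward and $b_1,b_2$ face upward. The internal connections are made on matching sides: wire $a_3$ (SE of top) to $b_2$ (NE of bottom), and wire $a_4$ (SW of top) to $b_1$ (NW of bottom). Both are short vertical links, so the construction is planar, and the four remaining locations $a_1,a_2,b_3,b_4$ sit clockwise at the four corners, matching a P2T layout.

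The second step is to verify the state behavior. In C2T state~1 the active diagonal tunnels are $a_1\!\to\!a_3$, $a_2\!\to\!a_4$, $b_1\!\to\!b_3$, $b_2\!\to\!b_4$. From compound state $(1,1)$, entering at $a_1$ traverses C2T$_a$ to $a_3$, crosses the connection to $b_2$, then traverses C2T$_b$ to $b_4$; entering at $a_2$ symmetrically ends at $b_3$. Each such walk flips both C2Ts simultaneously, sending $(1,1)$ to $(2,2)$. By reversibility (Lemma~\ref{lem:bijective_composition}), from $(2,2)$ the allowed walks are the reverses $b_4\!\to\!a_1$ and $b_3\!\to\!a_2$. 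The resulting two tunnels $a_1\!\leftrightarrow\!b_4$ and $a_2\!\leftrightarrow\!b_3$ are parallel and vertical, both pointing in the same direction in each state and flipping together on every use, which is exactly the P2T specification.

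The final step is to rule out spurious behavior: the compound must never reach a mixed state $(1,2)$ or $(2,1)$, and each external location must either have the expected unique exit or be a dead end. Because each internal location has only two incident edges (its tunnel and its connection), every external-to-external walk leaves one C2T through a connection and immediately enters the other C2T, so every walk flips both gadgets in lockstep; starting from $(1,1)$ we therefore remain in $\{(1,1),(2,2)\}$. Determinism and reversibility of the compound follow from Lemma~\ref{lem:bijective_composition}, so no hidden branching is possible. The main obstacle will be choosing the connection pattern correctly: the ``straight-across'' wiring $a_3\!-\!b_1,\ a_4\!-\!b_2$ yields a compound that is again a C2T, and only the same-side wiring $a_3\!-\!b_2,\ a_4\!-\!b_1$ uncrosses the tunnels while keeping the construction planar.
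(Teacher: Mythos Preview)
Your proposal is correct and takes essentially the same approach as the paper: two C2Ts are joined by straight (non-crossing) connections so that the two internal crossings compose to produce parallel external tunnels, with Lemma~\ref{lem:bijective_composition} supplying determinism/reversibility. The only cosmetic difference is orientation---the paper places the two C2Ts side by side so the resulting P2T tunnels run left--right, whereas you stack them vertically so the tunnels run top--bottom.
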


\begin{proof}
        The construction is given in Figure~\ref{fig:C2T-to-P2T}. In the state of the construction shown in the figure, there are two possible transitions: the robot can move from the top left to the top right of the construction, or from the bottom left to the bottom right. Either of these transitions toggles both C2Ts, leaving the construction mirrored left to right. The allowed traversals in one state (as shown above) are from the top left to the top right and from the bottom left to the bottom right, while the allowed traversals in the other state are (by symmetry) from the top right to the top left and from the bottom right to the bottom left. Following any of these traversals swaps the state of the construction. Notice that this is exactly the behavior of a P2T.
        
        Since the system is composed entirely of C2Ts (without even branching hallways), which are both reversible and deterministic, the result is also both reversible and deterministic, by Lemma~\ref{lem:bijective_composition}. Thus, the one transition described for each location in each state is the only transition possible.
\end{proof}

\begin{lemma}
    \label{lem:NTL-sim}
  2-toggles (AP2Ts, P2Ts and C2Ts) simulate a noncrossing toggle lock (NTL).\end{lemma}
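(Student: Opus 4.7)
The plan is to exhibit an explicit small composition of 2-toggles (of the three types AP2T, C2T, and P2T available after Lemmas~\ref{lem:C2T} and~\ref{lem:P2T}) whose external state space matches that of an NTL, and then verify the match by a direct case analysis on states and entry locations. Recall that the NTL has four external locations (two for the toggle tunnel, two for the lock tunnel) and two states. In the ``unlocked'' state, the toggle admits a single directed traversal (which flips the gadget to the ``locked'' state) while the lock tunnel is bidirectionally traversable without changing state; in the ``locked'' state the toggle admits only the opposite directed traversal (flipping the state back) and the lock tunnel is impassable.

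The construction idea is to take one 2-toggle as the ``master'' carrying the two global states, use one of its tunnels directly as the NTL's toggle, and then route the NTL's lock tunnel through a second 2-toggle whose tunnels are connected to the master in such a way that the two states of the master translate into ``bidirectionally passable, net state-preserving'' behavior and ``dead-end'' behavior, respectively. Concretely, the lock passage enters a tunnel of one 2-toggle, is then fed back into a tunnel of the other 2-toggle, and exits; one verifies that in the unlocked global state these two traversals compose to a net state-preserving pass, while in the locked global state the only directions available drive the robot back out the entrance. The construction in Figure~\ref{fig:2-toggle-to-1-toggle-lock} will supply the explicit wiring.

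The verification proceeds in two parts. First, I would exhibit, for each of the two global states and each of the four external entry locations, an explicit internal walk realizing the corresponding NTL transition (or confirm no walk exists, in the case of the lock entrances in the locked state). Second, I would argue that no other external-to-external walks exist. For this, Lemma~\ref{lem:bijective_composition} is the workhorse: since every gadget used is deterministic and reversible and we are combining only such gadgets, the composed system is itself deterministic and reversible, so the walk starting at any external location in any global state either traces a unique path to another external location or terminates in a dead-end inside the construction; no branching, looping, or accidental side effects are possible. Consequently it suffices to trace the deterministic paths from each (state, external location) pair and confirm they reproduce the NTL's state space.

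The main obstacle is the ``locked'' direction of the lock tunnel: every primitive available is fully reversible and bidirectional in some state, so the unpassability must arise globally from the coupling of the two tunnels and the master's current state, not from any single internal tunnel being blocked. I would verify this by enumerating the two outgoing options at the interior junction reached from a lock entrance in the locked state and checking that both options return the robot to the same lock entrance (i.e., the only reversible walk is the trivial one, so no external-to-external transition exists). Once this dead-end property and the three non-trivial transitions (one toggle direction per state, and the two lock directions in the unlocked state) are confirmed, determinism from Lemma~\ref{lem:bijective_composition} rules out any additional transitions, and the simulation is complete.
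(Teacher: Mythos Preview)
Your proposal has a genuine gap, and it stems from an internal inconsistency in the strategy.

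You describe a two-gadget construction (a ``master'' 2-toggle plus one auxiliary 2-toggle) and then plan to invoke Lemma~\ref{lem:bijective_composition} to get determinism for free. But these two pieces do not fit together. If the composition is truly branch-free (so that Lemma~\ref{lem:bijective_composition} applies), then there is no ``interior junction'' with ``two outgoing options'' to enumerate in the locked case, contrary to what you write. More importantly, a branch-free chain of two toggle tunnels is unidirectional in any fixed internal state and flips both component states when traversed; you cannot obtain the NTL's lock tunnel this way, since that tunnel must be traversable in \emph{both} directions while leaving the externally visible state unchanged, and repeatedly so. Any identification of internal states with ``unlocked'' that makes one direction work will break the other direction from the new internal state. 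So the two-gadget picture you sketch does not realize an NTL.

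If instead you intend to use the construction in Figure~\ref{fig:2-toggle-to-1-toggle-lock} (which has six 2-toggles and branching hallways), then Lemma~\ref{lem:bijective_composition} is simply inapplicable: branching hallways are not deterministic, so the composed system is not deterministic, and tracing ``the'' path from each entrance is not enough. The paper's proof handles exactly this difficulty with a different tool you omit entirely: a parity/cut argument (Lemma~\ref{lem:cut_rule}) showing that certain tunnel sets form cuts and hence are crossed an even number of times in any external-to-external walk. That is what pins down the accessible internal configurations and guarantees the lock state is preserved under middle-tunnel traversals. Without that argument (or an equivalent invariant), your case analysis cannot rule out the robot leaving the gadget in an unintended internal configuration, and the simulation claim is unproven.
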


\begin{proof}

The construction is shown in Figure~\ref{fig:2-toggle-to-1-toggle-lock}.

In this lemma, we will refer to toggles 1 and 2 in the figure as the ``outer toggles'', toggles 3 and 4 as the ``middle toggles'', and toggles 5 and 6 as the ``bottom toggles''. We will call the pathway through the lower tunnels of the bottom toggles the ``bottom tunnel'' of the overall gadget, and the rest of the gadget the ``middle tunnel'' of the overall gadget. 

An NTL has two externally observable states: locked, and unlocked.
        The locked state corresponds to the upper tunnels of the bottom toggles oriented out,
        and the unlocked state corresponds to the bottom toggles oriented in. The unlocked state is shown in Figure~\ref{fig:2-toggle-to-1-toggle-lock}.

        In this gadget, there are two internal states corresponding to each
        external state: with the horizontal tunnels of the middle toggles both oriented left, and with both oriented right.
        The only accessible states of this gadget are the states with the outer toggles oriented in, the middle toggles
        oriented both left or both right, and upper pathways of the bottom toggles
        oriented both in or both out. We will show that the gadget allows
        exactly the traversals of the NTL from these configurations,
        and cannot be left in any other configuration.

        The bottom tunnel traversals are straightforward --- the bottom tunnel
        acts as a toggle, and a traversal flips both bottom toggles, and hence the externally observable state.

        Also clearly, the robot cannot move between the bottom tunnel and the middle tunnel.

        Now, we wish to establish that in the unlocked state, the robot can always traverse
        the middle tunnel in either direction.
        In the state shown, the middle tunnel may be traversed from external location to external location as follows:

        \begin{itemize}
            \item
        The robot can get across, left to right, by traversing the following toggles
        in the following order: enter through toggle 1's lower tunnel, down to toggle 5,
        up to toggle 4's vertical tunnel, through toggle 1's upper tunnel, around the top to toggle 2's top tunnel, back down through toggle 4, back out through toggle 5, across through toggle 3's horizontal tunnel, then through toggle 4's horizontal tunnel, then out through toggle 2's lower tunnel.
            \item
        The robot can get across, right to left, by traversing the following toggles
        in the following order: enter through toggle 2's lower tunnel, down to toggle 6, up to toggle 4's vertical tunnel, through toggle 2's top tunnel, around to toggle 1's top tunnel, down through toggle 3's vertical tunnel, back out through toggle 6, across through toggle 4's horizontal tunnel, then through toggle 3's horizontal tunnel, then out through toggle 1's lower tunnel.
    \item
        
        If the middle toggles are in the opposite orientation, the system is simply
        mirrored, left to right, and the traversals are still possible.
\end{itemize}

        Next, we wish to establish that the robot cannot cross the middle tunnel in the locked state.
        After entering from either middle tunnel location, the only
        traversable toggles are the middle toggles. After traversing those,
        the robot can go no further. The bottom toggles can't be traversed, so the
        entire middle region is inaccessible. As a consequence,
        the opposite outer toggle's upper pathway can't be accessed.
        Therefore the robot can only leave via its original location.

        We also must establish that if the gadget starts in one of the configurations mentioned,
        the robot must leave it in the proper state, and can't leave it in a configuration that wasn't mentioned. This is straightforward for the bottom tunnel, so we will focus on the middle two locations.

        We will show that the accessible configurations of the gadget are exactly as described. To do so, we will make use of the concept of a cut in a gadget.

\begin{lemma}
\label{lem:cut_rule}
Let $A$ be a connected region of a planar embedding of a gadget system which does not contain any locations. Then the boundary of $A$, which we will call a cut, is traversed an even number of times during any traversal of the construction.
\end{lemma}
\begin{proof}
Whenever the boundary of $A$ is crossed, the robot goes from inside $A$ to outside or vice versa. Since the robot starts a traversal outside $A$ and ends it outside $A$, it must cross the boundary an even number of times.
\end{proof}

        The upper pathways of the outer toggles form a cut, and the lower pathways of the
        outer toggles form a cut. 
        Thus, the upper pathways of the outer toggles are crossed an even
        number of times, and the lower pathways are passed an even number
        of times, so the outer toggles must be passed an even number of times in total.
        Thus, the toggles must either be both oriented in or both out when leaving.
        However, when leaving the gadget, the outer toggle which the robot exited
        through must end up oriented in,
        so both outer toggles must end up oriented in.

        The vertical pathways of the middle toggles form a cut. The horizontal 
        pathways form a cut. Thus, upon leaving, the middle toggles must have
        been traversed an even number of times in total,
        and hence must end up both left or both right.

        The upper pathways of the bottom toggles must be passed an even number
        of times. So the upper pathways of those toggles must either be both
        in or both out when leaving the gadget system.

        Thus, the gadget system must be left in a state where the outer toggles are oriented in,
        the middle toggles are oriented either both left or both right, and the upper pathways
        of the bottom toggles are oriented either both in or both out. Therefore, these are exactly the accessible configurations, as desired.

        Finally, we show that the robot leaves the gadget in the same state it was entered in, if it is entered on the middle tunnel. If the robot passes through one of the upper tunnels of the bottom toggles, when it leaves the region bounded by the bottom toggles' upper tunnels, it must leave one of the bottom toggle's upper tunnels oriented in. By the parity constraint, both bottom toggles' upper tunnels will be oriented in, thus leaving the gadget in the unlocked state. If the central tunnels are entered in the unlocked state, they will be left in the unlocked state. In the locked state, the upper tunnels of the bottom toggles cannot be passed, and so the gadget will be left in the locked state.

        Thus, the construction correctly simulates a NTL.
\end{proof}

\subsection{2-toggles and non-crossing toggle locks simulate 2-toggle locks}
\label{sec:2TL}
\begin{wrapfigure}{r}{1.5in}
  \centering
  \vspace*{-2ex}
  \includegraphics[width=\linewidth]{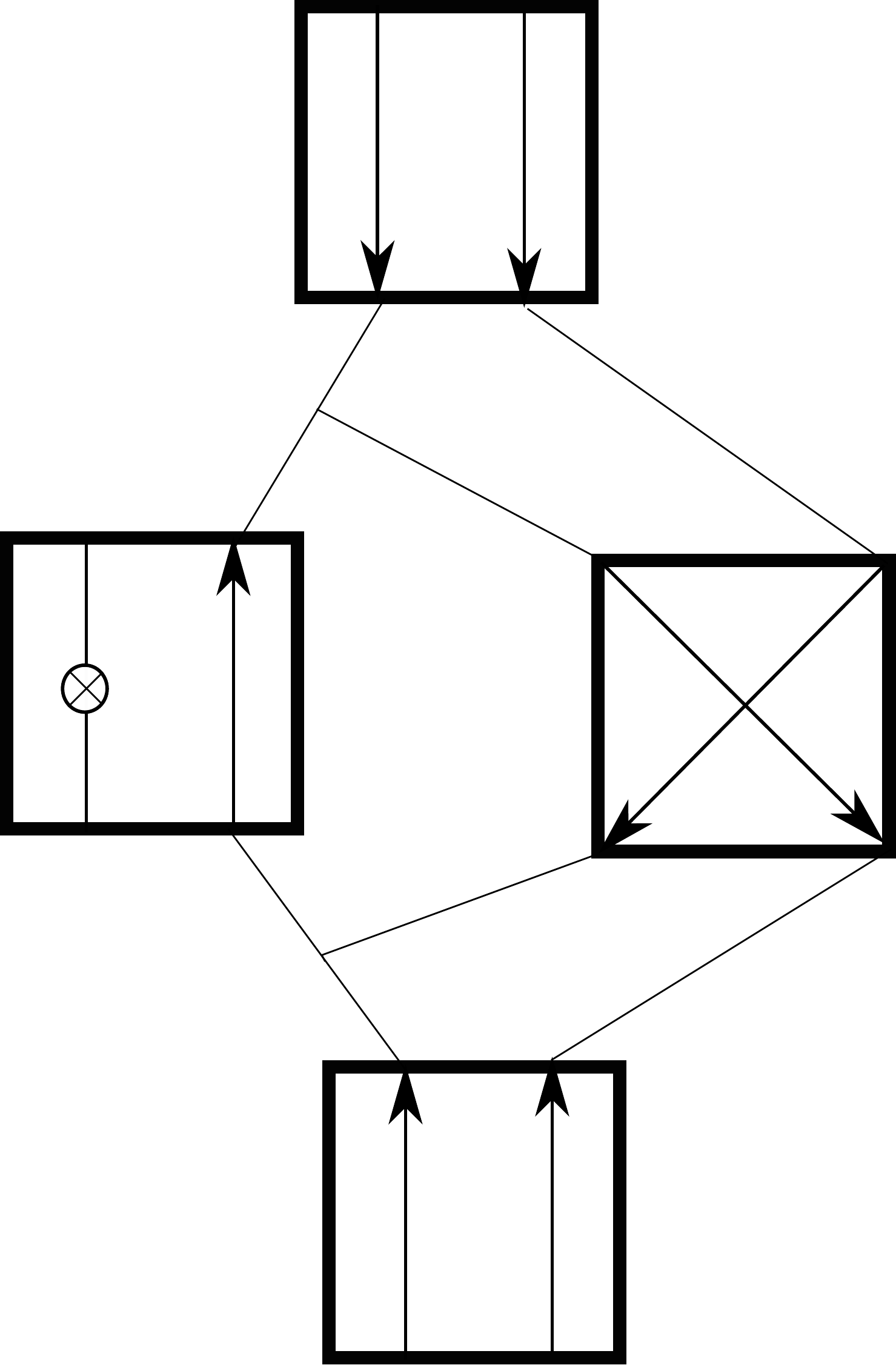}
  \caption{Round antiparallel 2-toggle-lock construction}
  \vspace*{-3ex}
  \label{fig:2-toggle-lock}
\end{wrapfigure}

We introduce some new three tunnel objects. There are several distinct planar topologies of the tunnels in a three tunnel object. We will focus on the two topologies which can be drawn with no internal crossing tunnels: three tunnels around the perimeter, and three tunnels in parallel. We will call the former a ``round'' topology, and the latter a ``stacked'' topology. Note that in the stacked topology, the order of the tunnels is relevant. In either topology, if there are multiple toggles, the relative orientation must still be specified.

        \begin{lemma}
            \label{lem:RP2TL}
          2-toggles and noncrossing toggle locks simulate a round antiparallel 2-toggle-lock (RAP2TL) and a round parallel 2-toggle-lock (RP2TL).
        \end{lemma}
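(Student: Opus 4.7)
The plan is to construct the round antiparallel 2-toggle-lock by combining an AP2T with an NTL (and the parallel variant by combining a P2T --- available via Lemma~\ref{lem:P2T} --- with an NTL). The AP2T's four external locations become the two toggle tunnels of the target gadget, and the NTL's lock tunnel becomes the lock. The central design requirement is to couple the NTL's internal state to the 2-toggle's state, so that traversing either toggle tunnel of the composite also flips the NTL's lock, while traversing the lock itself changes nothing. I would arrange this by routing the AP2T's internal pathways through the NTL's toggle, so that each completed external-to-external traversal across the AP2T is forced to traverse the NTL-toggle exactly once, and no lock-to-lock traversal ever touches it.

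The subtlety is that \emph{both} of the AP2T's tunnels must independently trigger the NTL. A clean way to enforce this is to interpose the NTL-toggle on a segment shared by both AP2T tunnels --- for example, by funneling the AP2T's internal routing through a small sub-network (possibly using an auxiliary 2-toggle or a branching hallway) so that any toggle-to-toggle walk across the composite crosses the NTL-toggle edge an odd number of times, while any lock-to-lock walk crosses it an even number of times. Figure~\ref{fig:2-toggle-lock} realizes one such planar layout in the round topology; the P2T variant is then obtained by substituting a P2T for the AP2T and redrawing.

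Verification would proceed in the same style as Lemma~\ref{lem:NTL-sim}. First, in each of the two global states, I would explicitly exhibit the intended traversal between every pair of external locations allowed by the target gadget and check that the composite indeed permits exactly those. Second, I would use the cut rule (Lemma~\ref{lem:cut_rule}) with cuts isolating the NTL-toggle, the NTL-lock, and each tunnel of the AP2T, to show that any completed external-to-external walk must cross each cut an even number of times, forcing the AP2T and the NTL to return to mutually consistent internal configurations. Since every component is deterministic and reversible, Lemma~\ref{lem:bijective_composition} guarantees the composite is too, which already sharply limits the reachable configurations.

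The main obstacle is this second step: ruling out sneaky partial traversals that backtrack and re-route through the construction in a way that desynchronizes the NTL from the AP2T, or that connects a toggle location to a lock location. I expect this to reduce, as in Lemma~\ref{lem:NTL-sim}, to a small fixed collection of cut-parity checks, with the essential new cut being the one isolating the NTL's toggle tunnel that enforces the coupling between the AP2T's state and the lock's passability.
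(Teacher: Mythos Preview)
Your high-level instinct --- ``couple the NTL's state to the 2-toggle's state so that every toggle traversal flips the lock'' --- is exactly right, but the mechanism you propose cannot work as written. An AP2T (or P2T) is an atomic gadget: it exposes four locations and nothing else. There is no ``internal pathway'' to reroute, and its two tunnels share no segment onto which you could splice the NTL's toggle. If you place the NTL toggle in series with one tunnel of the AP2T, traversing the \emph{other} tunnel flips the AP2T but not the NTL, and the two states desynchronize immediately. Branching hallways do not help here: any wiring that forces both tunnels to pass through the single NTL toggle will also merge the two tunnels' endpoints, destroying the 2-toggle structure you are trying to preserve. (Your reference to Figure~\ref{fig:2-toggle-lock} as ``realizing one such layout'' is optimistic; that figure does not contain an AP2T at all.)

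The paper's construction sidesteps this obstruction by \emph{not} starting from a black-box 2-toggle. Instead it builds the two toggle tunnels of the RAP2TL from scratch, using the NTL's toggle itself as one of the internal components, together with a C2T and two P2Ts. Concretely, the NTL toggle and the C2T sit in the middle, and each of the two simulated toggle tunnels is routed so that a traversal crosses \emph{both} middle gadgets exactly once (and one of the flanking P2Ts twice). Because the NTL toggle is on the common spine of both simulated tunnels, its state is automatically tied to the C2T's state, and the lock flips with every toggle traversal. The verification is then a direct case analysis of the four toggle-tunnel entrances (no cut-rule machinery needed), and the RP2TL follows by swapping the bottom P2T for a C2T to exchange the two bottom locations. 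The missing idea in your plan is precisely this: rather than bolting an NTL onto a finished 2-toggle, you must weave the NTL's toggle into the fabric that \emph{creates} the 2-toggle behavior.
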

        \begin{proof}

        The construction shown in Figure~\ref{fig:2-toggle-lock} simulates the behavior of a round antiparallel 2-toggle-lock.
        It has two externally accessible states: as shown, and with the middle two gadgets flipped. These correspond to the 2-toggle of the RAP2TL being pointed counterclockwise and clockwise respectively.

        We will demonstrate that this gadget is equivalent to a RAP2TL by examining all possible traversals. From the two locations that are on the lock tunnel of the NTL, the only possible traversals are to each other, if the lock tunnel is unlocked. This forms the lock tunnel of the RAP2TL.

        Traversals from the top left location: The robot must go down and to the right, due to the orientation of the toggle of the NTL.
        Then, the robot can go through the C2T, at which point
        it is blocked by the orientation of the bottom P2T. Thus, no traversal is possible from this location in this state.

        Traversals from the top right location: The robot can go through the C2T, then through the NTL. At this point, the robot cannot go through the C2T again, because the C2T has
        been toggled. Therefore, its only option is to go through the upper P2T and leave at the top left location. This traversal toggles both of the middle two gadgets, and toggles the upper P2T twice. Thus, the external state of the gadget is flipped. This is the equivalent of traversing the upper toggle of the RAP2TL that we are simulating.

        Traversals from the bottom left location: The robot must go up and to the left, due to the orientation of the C2T. Then, the robot can go through the NTL. Due to the orientation of the upper P2T, the robot must now go through the C2T. Now, the robot can leave at the bottom right location. This traversal toggles both of the middle two gadgets, and toggles the lower P2T twice. Thus, the external state of the gadget is flipped. This is the equivalent of traversing the lower toggle of the RAP2TL that we are simulating. 

        Traversals from the bottom right location: The robot is blocked by the orientation of the C2T. Thus, no traversal is possible from this location in this state.

        The opposite state is equivalent to a top-bottom mirror reversal, except for a change in the state of the lock, which does not affect which traversals are possible. Thus, in every state, this system of gadgets is equivalent to a round antiparallel two-toggle-lock (RAP2TL).
        
        Consider the gadget which is the same as the one in Figure~\ref{fig:2-toggle-lock}, except that the bottom P2T  is replaced with a C2T with its toggles allowing traversals from the bottom locations into the gadget. Clearly, the effect of this change is to swap the roles of the bottom two locations. As a result, this new construction is a round parallel two-toggle-lock, a RP2TL.
        \end{proof}


\begin{figure}
  \centering
  \begin{minipage}{0.45\textwidth}
    \centering
    \includegraphics[width=0.7\textwidth]{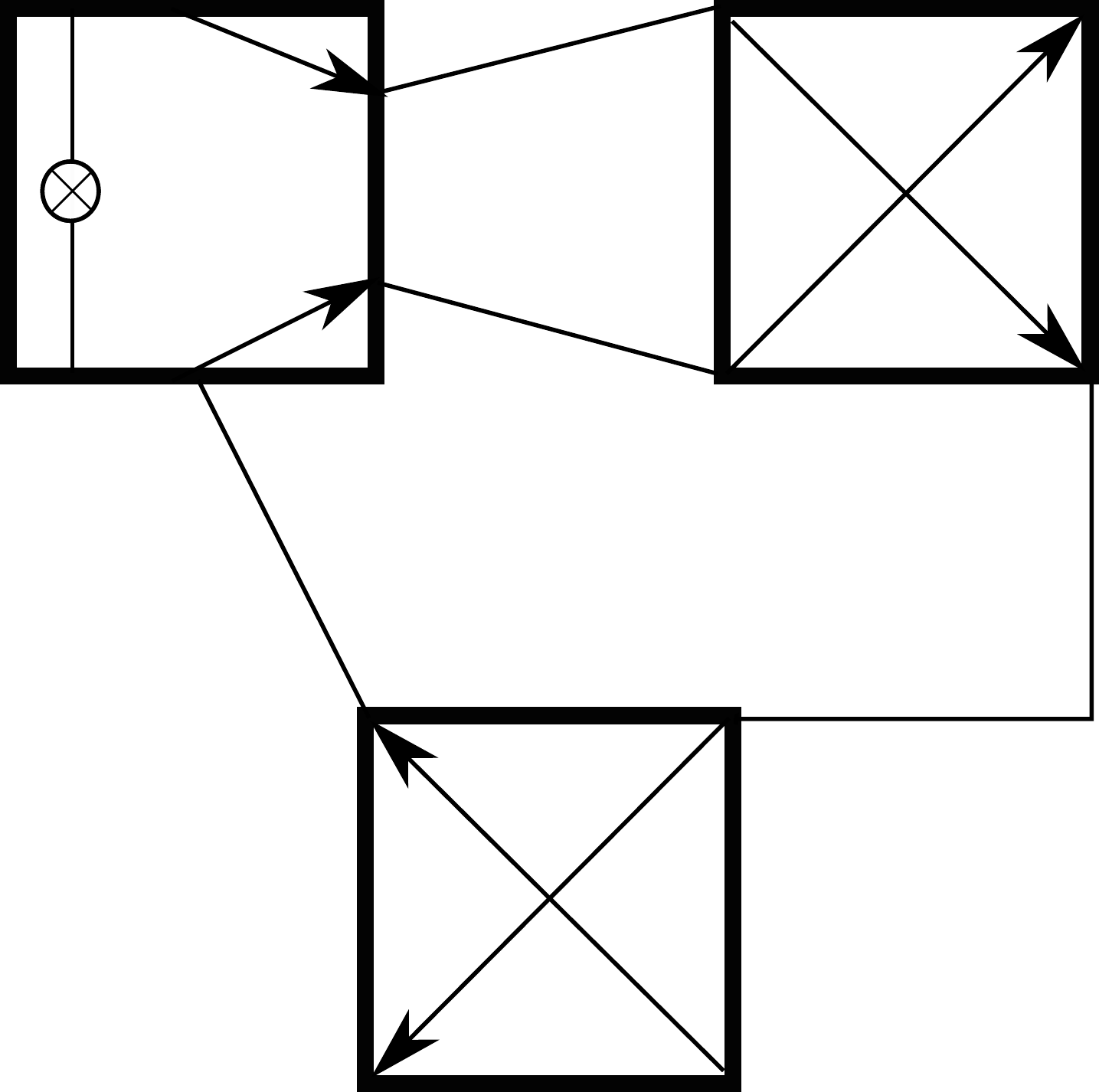}
    \caption{A round parallel 2-toggle lock is used to construct a stacked antiparallel 2-toggle lock}
    \label{fig:RP2TL-to-SAP2TL}
  \end{minipage}\hfil\hfil
  \begin{minipage}{0.45\textwidth}
    \centering
    \includegraphics[width=\textwidth]{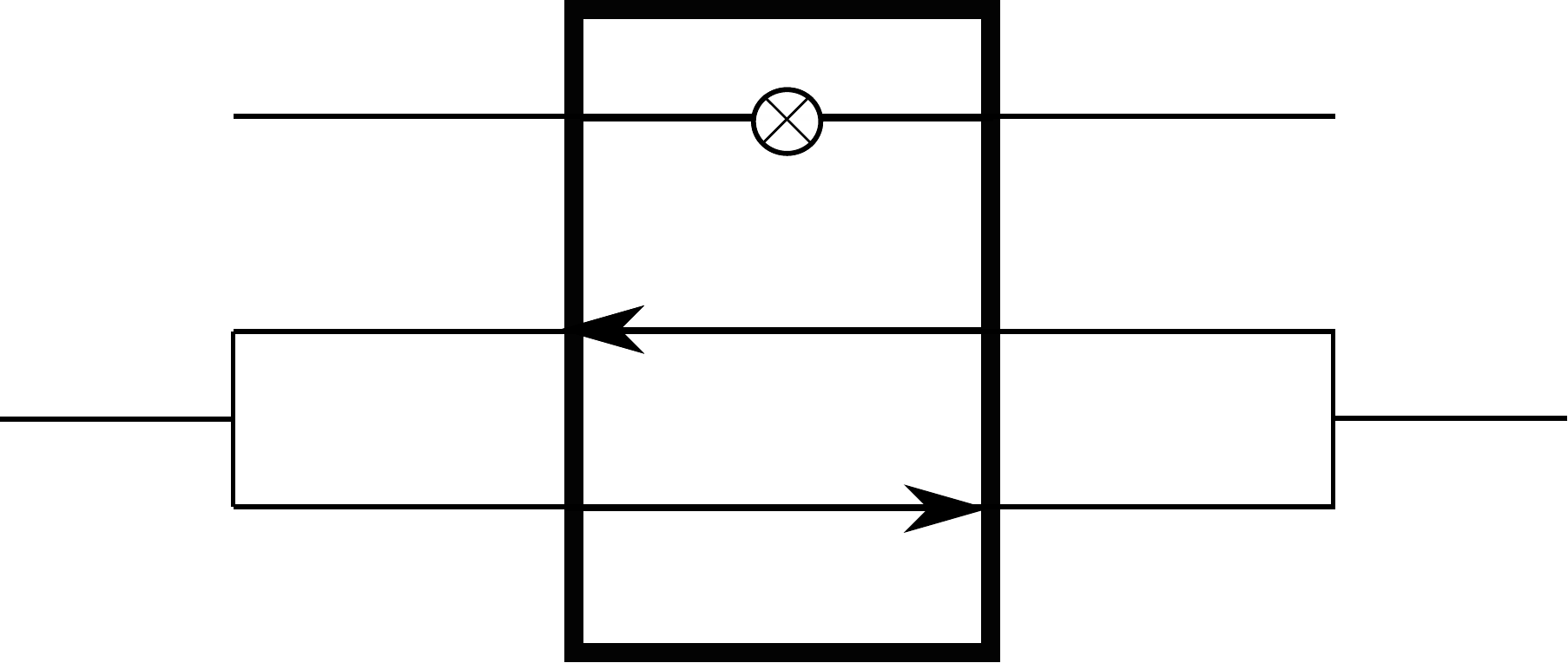}
    \caption{A noncrossing tripwire lock constructed from an anti-parallel 2-toggle and lock with the lock on the side}
    \label{fig:2-toggle-lock-to-NWL}
  \end{minipage}\hfill
\end{figure}

\begin{lemma}
    \label{lem:SAP2TL}
  RP2TLs and 2Ts simulate a stacked antiparallel 2-toggle-lock (SAP2TL).
\end{lemma}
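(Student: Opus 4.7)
The plan is to present the construction depicted in Figure~\ref{fig:RP2TL-to-SAP2TL}, which takes a single RP2TL at its heart and attaches a small number of 2-toggles to its external locations in order to accomplish two topological changes simultaneously: (i) rearrange the six external locations from a round (perimeter) layout to a stacked (three parallel sides) layout, and (ii) flip the orientation of one of the two toggle tunnels so that the pair becomes antiparallel instead of parallel. The lock tunnel of the RP2TL passes through essentially unchanged and becomes the lock tunnel of the SAP2TL.

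First I would identify the two externally observable states of the construction, corresponding to the two states of the inner RP2TL, and verify by inspection that each matches the corresponding state of an SAP2TL (same lock status, and the two toggles pointing in the intended antiparallel directions). Next, for each of the three tunnels and each of the two states, I would enumerate the possible traversals starting at the two stacked endpoints: for the lock tunnel this is immediate, as the external path simply enters and exits the RP2TL on its lock tunnel; for each toggle tunnel, I would trace the robot's path through the added 2-toggles and into/out of the correct pair of RP2TL locations, checking that the direction allowed by the RP2TL in the current state matches the direction allowed by the SAP2TL, and that the state change induced on the RP2TL is exactly the state change expected of the SAP2TL.

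To rule out unintended traversals and unintended internal states, I would argue in two ways. First, because the entire construction is built from deterministic reversible gadgets (RP2TL, 2T, and implicit branching hallways), Lemma~\ref{lem:bijective_composition} ensures the composite is itself deterministic and reversible, so once I exhibit the intended traversal from each location in each state, no other traversal from that location is possible. Second, to show that the robot cannot leave the construction in a ``junk'' configuration in which some of the added 2-toggles are out of sync with the RP2TL, I would apply the cut rule (Lemma~\ref{lem:cut_rule}) to cuts that separate each added 2-toggle from the external boundary, showing that each such gadget must be traversed an even number of times during any complete traversal and hence is returned to a position consistent with the advertised external state.

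The main obstacle will be the topological rearrangement itself: designing the 2-toggle wiring so that it simultaneously reorders the locations into the stacked layout, reverses one toggle's direction to produce the antiparallel configuration, and routes the lock tunnel out as the middle stacked tunnel, all without crossings and without creating any new traversal that shortcuts across the gadget. Once the construction in Figure~\ref{fig:RP2TL-to-SAP2TL} is fixed, the per-tunnel traversal analysis and the parity/cut arguments are routine and closely mirror the style used in Lemmas~\ref{lem:NTL-sim} and~\ref{lem:RP2TL}.
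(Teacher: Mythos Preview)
Your plan is essentially the paper's: the construction in Figure~\ref{fig:RP2TL-to-SAP2TL} uses one RP2TL together with two C2Ts, and correctness is obtained via Lemma~\ref{lem:bijective_composition}. Two points deserve correction, though.

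First, the paper's construction uses \emph{no} branching hallways, and this matters: Lemma~\ref{lem:bijective_composition} applies only to compositions of deterministic reversible gadgets, and branching hallways are nondeterministic. With no branching present, determinism of the composite is automatic, and the paper then simply observes that each of the two toggle traversals passes through \emph{all three} sub-gadgets, so the RP2TL and both C2Ts flip together; there is no separate ``in sync'' argument needed.

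Second, your fallback cut-rule argument does not establish what you want. A cut isolating a single added 2-toggle, if valid, would force an \emph{even} number of traversals and hence return that 2-toggle to its \emph{original} state; but after a toggle traversal of the SAP2TL the attached C2Ts are supposed to have \emph{flipped} (they are part of the external state), so ``even $\Rightarrow$ original'' is the wrong conclusion. In the actual layout any such cut would enclose external locations anyway, so Lemma~\ref{lem:cut_rule} does not apply. The parity step is both unnecessary (given determinism from Lemma~\ref{lem:bijective_composition}) and inapplicable here. A minor point: in the SAP2TL the two toggle tunnels are adjacent and the lock tunnel is on one side, not in the middle.
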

\begin{proof}
  A SAP2TL is a three tunnel gadget where the three tunnels cross the gadget in parallel, with the two antiparallel toggle tunnels next to each other.
  
  Starting with a RP2TL and two C2Ts, we can simulate a SAP2TL as shown in Figure~\ref{fig:RP2TL-to-SAP2TL}. The lock tunnel is straightforward. The two other traversals are from the top left to the bottom left, and from the bottom right to the top right. Both of these traversals pass through every gadget. In the other state, all three gadgets are flipped, and the same traversals are possible in the opposite direction.

  Since every state-affecting traversal traverses all gadgets, the states of the three gadgets always switch together, and the behavior is that of an SAP2TL. Equivalently, by Lemma~\ref{lem:bijective_composition}, the system of gadgets is deterministic and reversible, so the three traversals mentioned are the only ones possible, and the construction simulates a SAP2TL.
\end{proof}

\subsection{2-toggle locks simulate non-crossing wire locks}
\begin{lemma}
    \label{lem:NWL}
  AP2TLS simulates a NWL.
\end{lemma}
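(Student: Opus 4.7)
The plan is to simulate a NWL from a single SAP2TL (constructed in Lemma~\ref{lem:SAP2TL}) by adding two branching hallways. Recall that an SAP2TL has three parallel tunnels: the lock on one side of the stack, and two antiparallel toggles adjacent to each other. I will merge the two ``left'' endpoints of the two toggle tunnels into a single external port $E_\ell$ via a branching hallway, and symmetrically merge the two ``right'' endpoints into $E_r$. The resulting construction exposes four external ports: the two lock endpoints (inherited from the SAP2TL) and the two new ports $E_\ell, E_r$, arranged so that the lock tunnel and the new candidate tripwire lie on non-crossing sides of the gadget.

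To verify correctness, I would first argue the tripwire behavior. Because the two toggles of the SAP2TL are antiparallel, in each of its two states one toggle permits left-to-right traversal while the other permits right-to-left. Thus in either state, a robot entering $E_\ell$ can route through the branching hallway to whichever toggle is currently oriented in the correct direction and exit at $E_r$, and symmetrically for $E_r \to E_\ell$. Each such traversal crosses exactly one toggle, which flips the entire coupled state of the SAP2TL and hence also flips the lock between locked and unlocked, matching the tripwire--lock coupling required by a NWL. The lock tunnel of the SAP2TL is inherited unchanged; since it is internally disjoint from the toggle tunnels of the SAP2TL, no robot can cross between the lock ports and the tripwire ports.

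The main obstacle will be ruling out spurious traversals and inconsistent exits --- in particular, showing that the robot cannot enter at $E_\ell$, flip a toggle, and then escape in a way that leaves the two toggles out of sync or the lock inconsistent. I plan to handle this by case-analyzing each intermediate configuration. After a robot entering at $E_\ell$ crosses a toggle to arrive at some port on the right side, its only legal moves are to exit at $E_r$, to reverse through the same toggle, or to attempt the other toggle from the wrong end (blocked by antiparallelism). Crucially, the SAP2TL's state behaves atomically by Lemma~\ref{lem:SAP2TL}: every state-changing traversal of any of its tunnels flips all three tunnels together, so no reachable configuration can desynchronize the toggles from the lock. Finally, planarity follows because each branching hallway can be routed on the side of the SAP2TL opposite the lock, placing the four external ports in the correct non-crossing cyclic order expected of a NWL.
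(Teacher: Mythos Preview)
Your proposal is correct and is essentially the same construction as the paper's: take an SAP2TL, join the two toggle endpoints on each side (the paper's Figure~\ref{fig:2-toggle-lock-to-NWL} does exactly this), and observe that antiparallelism makes the merged pair of toggles bidirectionally traversable while each traversal flips the lock. Your explicit case analysis ruling out spurious exits is slightly more detailed than the paper's terse justification, but the underlying idea and construction are identical.
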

\begin{proof}
  By connecting the locations of the SAP2TL as shown in Figure~\ref{fig:2-toggle-lock-to-NWL}, we can simulate a NWL.

  Each traversal of either connected toggle tunnel flips the state. The connections between these two tunnels ensure that travel in either direction is always possible. As a result, the combination of these connected pathways acts as a tripwire, always allowing the robot to pass in either direction and opening or closing the lock with each traversal.
  \end{proof}
     
\subsection{Non-crossing wire locks simulate crossovers}
\begin{figure}
    \centering
  \includegraphics[width=.8\textwidth]{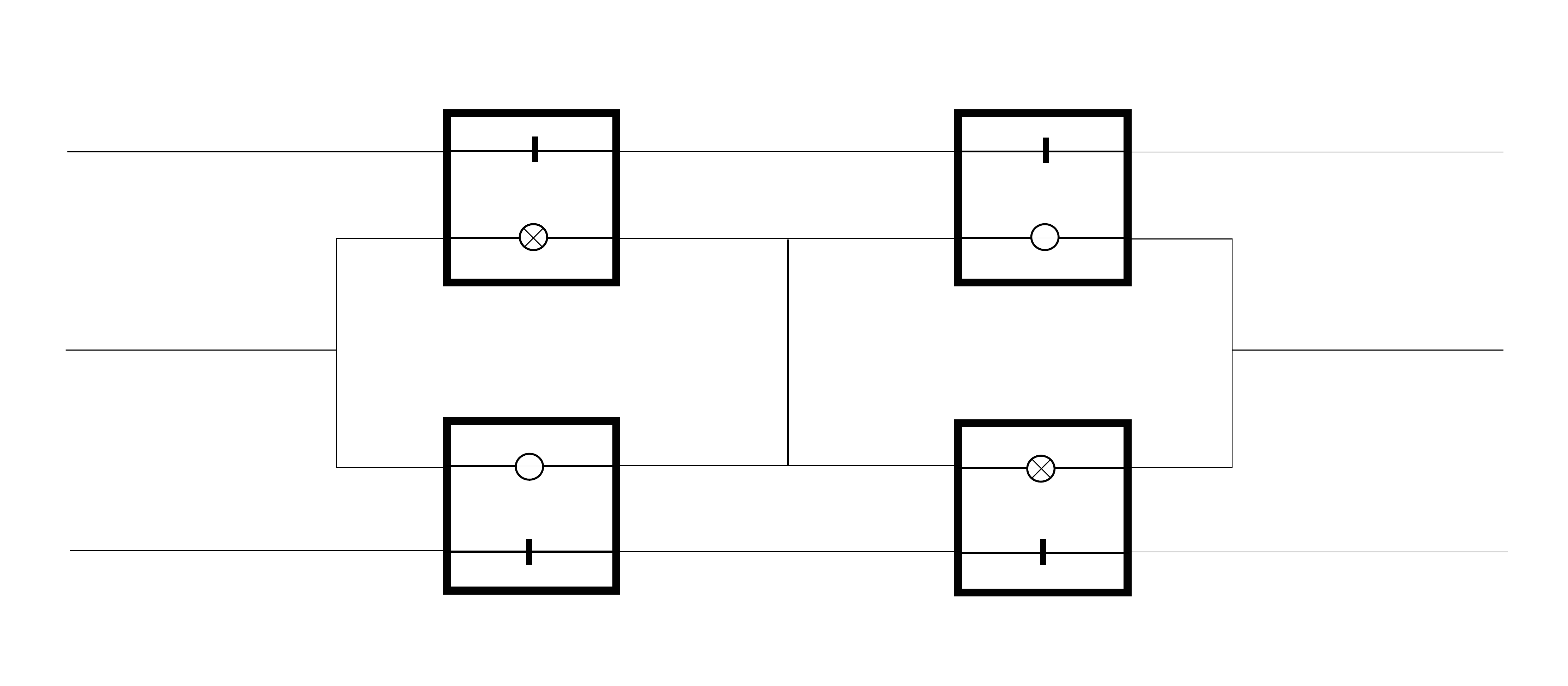}
  \caption{A stacked tripwire-lock-tripwire constructed from non-crossing tripwire locks.}
  \label{fig:SWLW}
\end{figure}
\begin{figure}
  \centering
  \includegraphics[width=.8\textwidth]{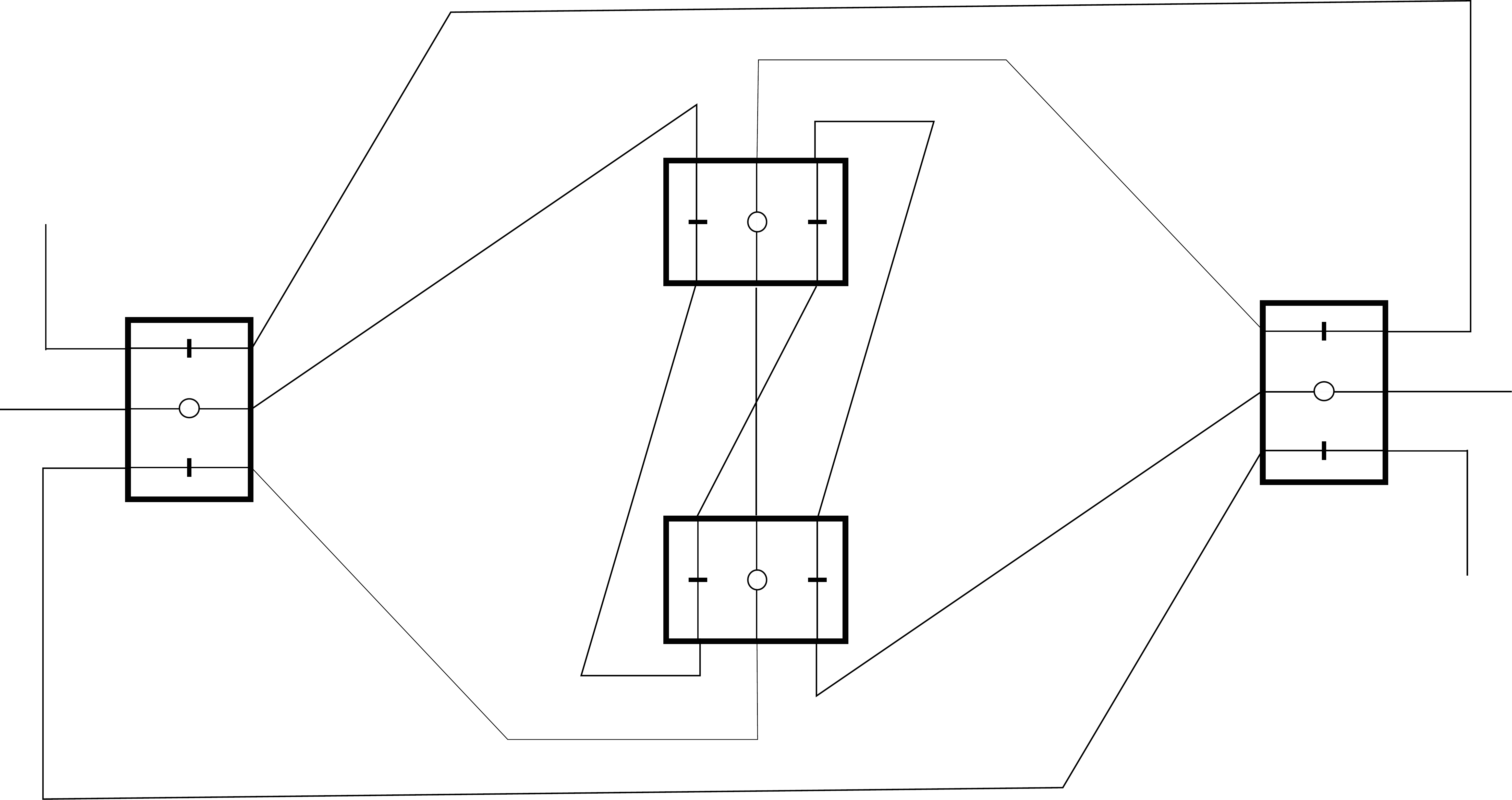}
  \caption{A crossover constructed from stacked tripwire-lock-tripwires}
  \label{fig:crossover}
\end{figure}
On our way to simulating a crossover, we will simulate another three tunnel gadget, a stacked tripwire-lock-tripwire (SWLW). Note that the lock tunnel is specifically the center tunnel.

\begin{lemma}
  \label{lem:SWLW}
  NWLs simulate a stacked tripwire-lock-tripwire (SWLW).
\end{lemma}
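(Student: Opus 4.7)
\begin{proofsketch}
The plan is to exhibit the construction of Figure~\ref{fig:SWLW} and verify it is equivalent to a SWLW. The idea is to combine a few NWLs (joined by branching hallways) so that the central lock tunnel of the SWLW is realized as the series composition of the NWLs' lock tunnels, while each of the two tripwire tunnels of the SWLW is routed so that a single traversal of it passes through every component NWL's wire tunnel and toggles all of them at once. This forces the component NWLs to stay in a synchronized configuration, and makes the central composite lock passable exactly when every component is unlocked. The two reachable internal configurations (all component NWLs unlocked, all component NWLs locked) then play the roles of the two external states of the SWLW.

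First I would enumerate the two reachable configurations and, for each, exhibit explicit paths through Figure~\ref{fig:SWLW} realizing every permitted SWLW move: both directions across each tripwire (in both configurations) and both directions across the central lock (in the unlocked configuration). Tracing these paths is mechanical once the figure is fixed. Lemma~\ref{lem:bijective_composition} ensures the composed system is deterministic and reversible, so that once a single valid path is found from a given location and state, no other traversal from that location in that state is possible.

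The main obstacle is showing that every permitted traversal actually terminates in the correct synchronized configuration --- that is, that the robot cannot leave the construction with only some of the component NWLs toggled, producing an inaccessible ``mixed'' internal state. Here I would apply cut arguments in the style of Lemma~\ref{lem:cut_rule}: draw a cut around each individual NWL's wire tunnel and around each lock tunnel. Each cut must be crossed an even number of times during a complete external traversal, forcing every NWL wire to be toggled the same number of times modulo~$2$ and every NWL lock to be traversed an even number of times. Combined with the explicit path analysis from the previous step, this shows that the external behavior matches the SWLW exactly, so the construction simulates a SWLW.
\end{proofsketch}
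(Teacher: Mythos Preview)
Your sketch guesses the wrong construction, and the guess does not actually work. You describe routing \emph{both} external tripwire tunnels through \emph{every} NWL's wire tunnel so that all components toggle together, with the central lock realized as the series of all component locks. But each NWL contributes exactly one wire tunnel with two endpoints; once those endpoints are spliced into the top tripwire's path they cannot also lie on a separate bottom tripwire path. The only way to share a wire between the two tripwires is via branching hallways at its endpoints, and then a robot entering at a top-tripwire location can exit at a bottom-tripwire location, so the system no longer behaves like two disjoint tripwires. Relatedly, your appeal to Lemma~\ref{lem:bijective_composition} is not valid here: that lemma applies only to compositions of deterministic reversible gadgets, and branching hallways are not deterministic.

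The paper's construction in Figure~\ref{fig:SWLW} is different in kind. It uses four NWLs, arranged as a $2\times2$ grid (top/bottom rows, left/right columns). The top tripwire passes through only the two \emph{top} wires; the bottom tripwire passes through only the two \emph{bottom} wires. The invariant maintained is that exactly one NWL in each row is locked, so there are \emph{four} reachable internal states, not two. The center tunnel is arranged (with branching) so that it is blocked precisely when both NWLs in some \emph{column} are locked. Two of the four internal states have this property and two do not, giving the two external states; traversing either tripwire flips one row, which always swaps which pair you are in. The argument is then a direct case check on these four states --- no cut lemma is needed. Your series-lock idea with fully synchronized wires cannot reproduce this two-tripwire behavior, so the sketch as written has a real gap rather than merely a different route.
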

\begin{proof}
  The construction is shown in Figure~\ref{fig:SWLW}.
There are four accessible states of this gadget, which are any of the states where there is one locked and one unlocked NWL among the two top NWLs, and one of each among the two bottom NWLs.

The states can only be changed by traversing the tripwire tunnels, and doing so flips both NWLs on the side traversed, maintaining the invariant.

If both left NWLs are locked, or both right NWLs are locked, the center tunnel is not passable. In the other two accessible states, the center tunnel is passable. The two pairs correspond to the two external states, with the lock locked and unlocked respectively. In any state, traversing either tripwire moves the gadget to a state with the opposite passability of the lock tunnel. Thus, this construction simulates a SWLW.
\end{proof}
\begin{lemma}
	\label{lem:crossover}
  SWLWs simulate a crossover.
\end{lemma}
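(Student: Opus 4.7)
The plan is to analyze the construction of Figure~\ref{fig:crossover} by designating two ``straight-through'' routes through the SWLW system, one for each of the crossover's two tunnels, and then to verify that these are the only complete traversals permitted. First I would identify the four external locations and pair them into the two intended crossover tunnels: one tunnel whose traversal route passes (in each of its SWLWs) through lock tunnels, and one tunnel whose traversal route passes (in each SWLW it enters) through both tripwire tunnels. Using the initial configuration as shown, I would trace each route step by step and confirm that every tunnel used is passable: lock tunnels are used only when the corresponding SWLW is in its unlocked external state, and tripwires are always passable.

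Next I would verify that each complete straight traversal preserves the overall external state of the construction, so that the whole system has only one accessible external state, as a crossover should. A lock traversal is state-preserving by definition, and any traversal that enters a given SWLW through one tripwire and leaves through the other toggles that SWLW's lock exactly twice, producing no net change. Combined with the observation that both straight traversals are available in either direction, this shows that the construction simulates the positive behavior of a crossover.

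The hard part is ruling out ``turn'' traversals and exotic partial-then-resume paths that would let the robot enter at one external location and exit at a non-opposite one. For this I would apply Lemma~\ref{lem:cut_rule} to each individual tunnel of each SWLW: each tripwire tunnel and each lock tunnel separates the construction into two regions, so any complete traversal must cross it an even number of times. Taken together across all SWLWs in the figure, these parity constraints, combined with the determinism and reversibility of the overall system (Lemma~\ref{lem:bijective_composition}) and the planar routing of the connections, force any complete entrance-to-exit traversal to follow one of the two straight routes already described. In particular, a turn would require crossing at least one such cut an odd number of times and is therefore impossible.

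The main obstacle is the bookkeeping in this last step: one must check that intermediate configurations reached during a partial traversal (where some SWLWs are temporarily in the ``wrong'' lock state) cannot be exploited by first doing part of one intended traversal, then using a now-open lock tunnel elsewhere to exit via an unintended location. The parity arguments handle this uniformly --- the global cuts are insensitive to the internal state --- but the case analysis must still enumerate, for each pair of non-opposite external locations, a cut that separates them and is present in every accessible configuration of the construction. Once that is done, the construction is verified to simulate a crossover, completing the chain of simulations and hence the proof of Theorem~\ref{thm:AP2T-complete}.
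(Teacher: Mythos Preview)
Your proposal misreads the construction and therefore aims at the wrong target. In the paper's Figure~\ref{fig:crossover}, the four external locations are \emph{symmetric}: every straight traversal passes through both tripwires of two of the SWLWs and through the lock tunnels of the other two. There is no pairing into ``a lock-only route'' and ``a tripwire-only route'' as you describe; that asymmetry does not exist in the figure, and building your analysis on it would not match the gadget you are supposed to be verifying.

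More importantly, the paper's argument for excluding turns is not a global parity/cut argument at all, and your deferred ``bookkeeping'' is exactly the part the paper handles directly and simply. The key mechanism is dynamic, not combinatorial: all four SWLWs start unlocked; from any entrance the robot's path to the central four-way intersection is forced and, along the way, crosses one tripwire of each of two SWLWs, \emph{locking precisely the two SWLWs whose lock tunnels sit on the orthogonal arms}. Hence at the center the two perpendicular exits are blocked by freshly closed locks, and the only option is to continue straight, crossing the second tripwire of the same two SWLWs (restoring them) and the still-open locks of the other two. That single observation---entering locks the orthogonal arms---simultaneously proves that turns are impossible and that the state is restored. Lemma~\ref{lem:cut_rule} is not invoked, and Lemma~\ref{lem:bijective_composition} is not needed here (the system contains a branching four-way intersection, so it is not even deterministic in the sense of that lemma). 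Your parity plan might be salvageable, but it replaces a two-line causal argument with a case analysis you yourself flag as the ``main obstacle.''
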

\begin{proof}
        \label{sec:crossover}
        The gadget shown in Figure~\ref{fig:crossover} implements a crossover. The robot may always cross from left to right,
        right to left, top to bottom and bottom to top, but in no other directions. There is a single accessible state, the one with all four SWLWs in the unlocked state.

        When the robot enters from any of the four external locations it has only a single option up until the point where it reaches the four-way intersection at the center. Upon reaching this point, the robot has traversed the tripwire tunnels of two of the SWLWs, locking them. In particular, the SWLWs whose lock tunnels are on the two orthogonal pathways are locked. For instance, if the robot entered from the top, the left and right pathway's SWLWs would be locked at this point. As a result, the only way for the robot to continue is to go straight, passing through the other tripwires of the same two SWLWs, and emerging from the other side. The robot has completed a crossover traversal, with no other options.

        Because the robot passed through the tripwires of two SWLWs twice, and only the lock tunnels of the other two SWLWs, the object is left in its original state, making the state shown in Figure~\ref{fig:crossover} the only accessible state. This construction correctly simulates a crossover.
\end{proof}

For the PSPACE-completeness result, we make use of 2-toggle locks and crossovers. Combining the lemmas in Section~\ref{sec:AP2T}, we have the result we will make use of:

\begin{theorem}
  \label{thm:AP2T-all}
  AP2Ts simulate crossovers and all 2-toggle-locks.
\end{theorem}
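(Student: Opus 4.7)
The plan is to observe that Theorem~\ref{thm:AP2T-all} is essentially the transitive closure of the chain of simulation results developed throughout Section~\ref{sec:AP2T}. Since ``simulates'' composes (substituting each instance of a simulated gadget by its planar AP2T-based implementation yields a larger planar construction that realizes the target gadget), it suffices to stitch Lemmas~\ref{lem:C2T} through~\ref{lem:crossover} together in the correct order.

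First I would build up the crossover. Lemma~\ref{lem:C2T} gives AP2T $\Rightarrow$ C2T, and Lemma~\ref{lem:P2T} gives C2T $\Rightarrow$ P2T, so AP2Ts simulate every variety of 2-toggle. Lemma~\ref{lem:NTL-sim} then builds an NTL out of these 2-toggles, so AP2Ts simulate NTLs. Feeding NTLs plus 2-toggles into Lemma~\ref{lem:RP2TL} yields the round antiparallel and round parallel 2-toggle-locks from AP2Ts; Lemma~\ref{lem:SAP2TL} upgrades an RP2TL into a stacked antiparallel 2-toggle-lock; Lemma~\ref{lem:NWL} turns an SAP2TL into a non-crossing wire lock; Lemma~\ref{lem:SWLW} assembles NWLs into a stacked tripwire-lock-tripwire; and finally Lemma~\ref{lem:crossover} arranges SWLWs into a crossover. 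Composing all substitutions yields an AP2T-only planar crossover, establishing the first half of the theorem.

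For the ``all 2-toggle-locks'' clause, Lemma~\ref{lem:RP2TL} already produces both round topologies (RAP2TL and RP2TL) and Lemma~\ref{lem:SAP2TL} produces the stacked antiparallel variant. The stacked parallel variant is obtained by the same toggle-swap trick used at the end of Lemma~\ref{lem:RP2TL}: replacing one of the 2-toggles fed into the SAP2TL construction by its oppositely oriented counterpart swaps the roles of the two toggle locations. Any remaining variation in the ordering of the three stacked tunnels can be achieved by rerouting the external locations with branching hallways, which we are always allowed to use and which can be done planarly.

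The main obstacle here is not any deep construction but rather bookkeeping: verifying that the cascaded substitutions remain planar, and that no unintended transitions arise after nesting many simulations. The planarity claim is immediate because each lemma exhibits a planar drawing and external locations of each sub-gadget can be placed arbitrarily around its boundary. The ``no unintended transitions'' claim is handled for free by the closure properties of Lemmas~\ref{lem:undirected_composition} and~\ref{lem:bijective_composition}: any composition of deterministic reversible gadgets is itself deterministic and reversible, so once we have exhibited the intended transitions and confirmed that the state space has the right size, equivalence with the target gadget follows. Combining everything, AP2Ts simulate crossovers and all 2-toggle-locks, as claimed.
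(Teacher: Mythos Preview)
Your chain for the crossover half is exactly the paper's argument: compose Lemmas~\ref{lem:C2T}--\ref{lem:crossover} transitively and substitute. No issue there.

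The ``all 2-toggle-locks'' half, however, has a real gap. You write that ``any remaining variation in the ordering of the three stacked tunnels can be achieved by rerouting the external locations with branching hallways, which we are always allowed to use and which can be done planarly.'' This is false. Branching hallways do not let you permute the cyclic order of a gadget's external locations in a planar system; changing which tunnel sits between the other two is precisely a crossing, and a forest of branching hallways cannot supply one. Likewise, the ``same toggle-swap trick'' you invoke for a stacked \emph{parallel} 2-toggle-lock is asserted rather than checked: the swap in Lemma~\ref{lem:RP2TL} works because of the specific routing in Figure~\ref{fig:2-toggle-lock}, and it is not obvious (and you do not argue) that the analogous replacement inside the SAP2TL construction of Figure~\ref{fig:RP2TL-to-SAP2TL} swaps exactly the two locations you want without breaking anything else.

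The paper sidesteps all of this bookkeeping with a one-line observation you already have the ingredients for: once AP2Ts simulate a crossover, you can use crossovers to permute the external locations of \emph{any} single 2-toggle-lock you have built (say the RAP2TL from Lemma~\ref{lem:RP2TL}) into an arbitrary planar arrangement. So you only need one 2-toggle-lock plus the crossover, not a separate construction for each topology. Replacing your final paragraph's appeal to branching hallways with this use of the crossover closes the gap and in fact simplifies your argument.
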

\begin{proof}
  By composing the lemmas in Section~\ref{sec:AP2T}, we see that AP2Ts simulate crossovers and RAP2TLs. By using the crossover to effectively rearrange locations, we can simulate an arbitrary 2-toggle-lock.
  \end{proof}

\section{Everything simulates everything else}
\label{sec:everything}
The remaining gadgets of interest are each individually (when combined with branching hallways) sufficient to make motion planning problems PSPACE-complete. Moreover, each gadget can be simulated by a constant number of each other gadget. To prove this, we give simple gadgets to show how to construct noncrossing-tripwire-toggles from anti-parallel-2-toggles, and anti-parallel 2-toggles from each of noncrossing-toggle-locks, noncrossing-wire-locks, noncrossing-wire-toggles and parallel-2-toggles. We then show that a crossing version of a gadget can very simply make a non-crossing version of the same gadget.

\begin{theorem}
    \label{thm:simulate}
    The 2-toggles, toggle-locks, tripwire-locks and tripwire-toggles, in all orientations, can each simulate each other.
\end{theorem}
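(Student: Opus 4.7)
The plan is to make AP2T the universal ``hub'' to which all other gadgets in the list mutually reduce. Theorem~\ref{thm:AP2T-all} already gives the forward half: AP2T simulates crossovers, P2T, C2T, NTL, NWL, and every 2-toggle-lock. To close the loop of mutual simulations, I would establish that AP2T simulates NWT as well, that each of NTL, NWL, NWT, and P2T simulates AP2T in turn, and finally that within this list every crossing planar variant is interconvertible with its non-crossing counterpart.

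I would organize the argument in three steps. First, for AP2T simulating NWT, I would give a small direct construction---likely two AP2Ts wired so that one pathway traverses both (acting as a tripwire that flips both toggles) while the other pathway exposes the shared coupled direction---and verify correctness both by enumerating the allowed traversals and by applying the cut rule of Lemma~\ref{lem:cut_rule} to exclude unintended composite transitions. Second, for each of NTL, NWL, NWT, and P2T simulating AP2T, I would present a constant-size composition of the target gadget (plus branching hallways) whose state space is isomorphic to an AP2T: in each case one identifies two directional ``slots'' coupled by an internal parity invariant, which the cut rule then pins down exactly in the style of Lemma~\ref{lem:NTL-sim}. Third, for the crossing planar variants, I would observe that once any of the four reductions of Step~2 is in place we have simulated an AP2T and hence a crossover by Theorem~\ref{thm:AP2T-all}; a crossover then lets us reorder the boundary locations of any gadget freely, converting between its crossing and non-crossing planar variants in both directions. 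The reverse---a crossing gadget recovering its non-crossing counterpart---I expect to be immediate by a short planar-routing argument, since the internal crossing of the gadget can itself serve to swap two adjacent boundary locations.

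The main obstacle is Step~2. Each of the four reverse simulations requires its own wiring because the four target gadgets differ in how state is coupled between the two tunnels: a toggle-lock supplies ``direction'' through its toggle and coupling through its lock; a wire-lock does the opposite, using the lock for direction and the tripwire for coupling; a wire-toggle supplies both through the same toggle tunnel; and a P2T already behaves as a 2-toggle whose relative tunnel orientation needs only to be inverted. In every case I expect that two or three copies of the target gadget suffice, with correctness verified by the same cut-rule template that worked in Lemma~\ref{lem:NTL-sim}, so the analysis is systematic, but writing down and checking four distinct constructions is the substantive work.
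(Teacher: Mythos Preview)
Your plan matches the paper's proof almost exactly: AP2T is the hub, Theorem~\ref{thm:AP2T-all} together with a new AP2T$\to$NWT lemma gives the forward direction, separate lemmas show that each of P2T, NTL, NWT, NWL simulates AP2T, and the crossing variants are handled by a direct uncrossing construction (two copies of the crossing gadget suffice) together with the crossover for the other direction. The one place your expectations diverge is Step~1: the paper's NWT construction (Lemma~\ref{lem:NWT}) is not a small two-AP2T wiring but routes through four NWLs, two crossovers, and a 1-toggle, so that step is heavier than you anticipate, though this does not affect the overall architecture.
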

\begin{proof}
  We have already established that AP2Ts can simulate P2Ts, C2Ts, NTLs and NWLs and crossovers. We will establish that:
  \begin{itemize}
      \item AP2Ts can simulate NWTs. Lemma~\ref{lem:NWT}.
      \item P2Ts, NTLs, NWTs and NWLs can each simulate AP2Ts. Lemmas \ref{lem:P2T-A}, \ref{lem:NTL-A}, \ref{lem:NWT-A}, \ref{lem:NWL-A}, respectively.
      \item C2Ts can simulate P2Ts by Lemma~\ref{lem:P2T}, and hence AP2Ts as well.
      \item CTLs can simulate NTLs, CWLs can simulate NWLs, and CWTs can simulate NWTs. Lemma~\ref{lem:uncross}.
  \end{itemize}
  Thus, every gadget can simulate AP2Ts, and AP2Ts can simulate every non-crossing gadget, as well as crossovers. By combining non-crossing gadgets with crossovers, AP2Ts can simulate every gadget. This gives a simulation of every gadget by every other gadget, via AP2Ts as an intermediate step.
\end{proof}

\begin{corollary}
    \label{thm:all-complete}
Motion planning with any one of the gadgets in Theorem~\ref{thm:simulate} (and branching hallways) is PSPACE-complete.
\end{corollary}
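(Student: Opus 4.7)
The plan is to chain together the simulations established in Theorem~\ref{thm:simulate} with the earlier PSPACE-hardness result to get a reduction for each individual gadget type, and then invoke Lemma~\ref{lem:in-pspace} for the matching upper bound.

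First I would handle PSPACE membership: since every gadget we consider has only $2$ states, the full configuration of any puzzle (gadget states plus robot location) is described in polynomial space, so Lemma~\ref{lem:in-pspace} immediately gives containment in PSPACE for any gadget in the list.

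For hardness, let $G$ be any one of the gadgets enumerated in Theorem~\ref{thm:simulate}. By that theorem, $G$ together with branching hallways can simulate the AP2T. By Theorem~\ref{thm:AP2T-all}, AP2Ts (plus branching hallways) can simulate both the $2$-toggle-lock and the crossover. Composing these two simulations gives a planar construction, using only copies of $G$ and branching hallways, that simulates any $2$-toggle-lock and a crossover; this is where I would cite that systems-of-gadgets simulations compose, which follows immediately from the definition of ``simulates'' (one simply substitutes each simulated gadget with its implementing subsystem). Now, given any instance $I$ of motion planning with $2$-toggle-locks and crossovers (which is PSPACE-hard by Theorem~\ref{thm:qsat}), replace each $2$-toggle-lock and each crossover in $I$ by the corresponding simulation built from $G$'s and branching hallways, keeping the start and goal locations fixed. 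Because the simulated subsystems are by definition equivalent to their targets on their external locations and states, the robot can reach the goal in the new system iff it can do so in $I$. The reduction is polynomial-time, since each simulation has constant size.

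I do not expect a genuine obstacle here; the real content has already been packed into Theorems~\ref{thm:qsat}, \ref{thm:AP2T-all}, and \ref{thm:simulate}. The one point that deserves a sentence of care is the transitivity of the ``simulates'' relation, i.e., that substituting a simulating subsystem for each occurrence of a simulated gadget preserves the input-output behavior of the whole puzzle. This is immediate from the fact that the equivalence of gadget systems is defined in terms of matching transition structure on external (location, state) pairs, so nested substitution just composes bijections on external states and locations.
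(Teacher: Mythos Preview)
Your proposal is correct and follows essentially the same approach as the paper. The paper's proof is slightly terser: it cites Theorem~\ref{thm:AP2T-complete} directly (which already bundles PSPACE membership and the AP2T $\to$ 2-toggle-lock/crossover $\to$ hardness chain), whereas you unpack that into Lemma~\ref{lem:in-pspace}, Theorem~\ref{thm:AP2T-all}, and Theorem~\ref{thm:qsat}; the content is the same.
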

\begin{proof}
Corollary~\ref{thm:all-complete} follows from Theorem~\ref{thm:simulate}, which establishes that each gadget can simulate a AP2T, and Theorem~\ref{thm:AP2T-complete}, which establishes that motion planning with AP2Ts is PSPACE-complete.
\end{proof}

\begin{lemma}
    \label{lem:NWT}
	AP2Ts simulate an NWT.
\end{lemma}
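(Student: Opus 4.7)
The plan is to construct an NWT by pairing two AP2Ts whose states remain synchronized, echoing the recipe used for the NWL in Lemma~\ref{lem:NWL}. I would place branching hallways at each tripwire endpoint to split access between the two antiparallel toggles of a first AP2T, so that in either state exactly one of the two internal toggles is traversable from each endpoint; this gives the bidirectional passability required of the tripwire. To expose a directional toggle tunnel on the side, I would introduce a second AP2T and couple the two gadgets by routing every intended traversal—each direction of the tripwire and the single allowed direction of the NWT toggle in each state—through tunnels of \emph{both} AP2Ts in sequence. The result is that every external-to-external traversal flips both internal states together, so the joint state space collapses to exactly two synchronized configurations.

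Concretely, I would label the four external ports so that two of them form the NWT tripwire pair and the other two form the NWT toggle pair, with one internal bridge connecting the two AP2Ts. The internal connections are chosen so that, in state~$A$, the tripwire-forward path uses the first toggle of each AP2T, the tripwire-reverse path uses the second toggle of each AP2T (which are oriented oppositely inside the AP2T, enabling the reverse direction), and the toggle path uses tunnels that are only traversable from the toggle source in state~$A$. Planarity is preserved because every prior construction we invoke is planar, and any unavoidable crossing can be eliminated via a crossover from Theorem~\ref{thm:AP2T-all}.

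Correctness has two parts. First, I would exhibit explicit paths realizing each intended traversal in each state and verify that the resulting state transitions match those of the NWT. Second, I would apply the cut-parity argument of Lemma~\ref{lem:cut_rule} exactly as in Lemma~\ref{lem:NTL-sim}: the two antiparallel toggles of each AP2T each form a cut of the construction, so any external-to-external traversal must cross each such cut an even number of times. This forces the two AP2Ts to end in synchronized states and rules out parasitic exits at unintended ports.

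The main obstacle is preventing parasitic traversals that would connect a tripwire port to a toggle port, or that would leave the two AP2Ts desynchronized after a traversal. Because synchronization is enforced by routing intended paths through both AP2Ts, each extra branching hallway introduces a potential shortcut between the tripwire and toggle halves of the construction. Resolving this requires choosing the branching pattern carefully, and if necessary inserting NTL-style locks (available via Lemma~\ref{lem:NTL-sim}) on branches that could otherwise yield unintended external-to-external connections; the cut-parity analysis then certifies that only the advertised three traversals per state and two synchronized states remain accessible, completing the simulation.
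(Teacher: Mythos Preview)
Your proposal is a strategy outline rather than a proof: you describe how you \emph{would} route traversals through two synchronized AP2Ts, but you never actually exhibit the construction. The sentence ``I would label the four external ports \ldots\ with one internal bridge connecting the two AP2Ts'' and the promise that ``internal connections are chosen so that \ldots'' are exactly the content that needs to be supplied. Without the explicit wiring, neither the existence of the three intended traversals nor the absence of parasitic ones can be checked, and your own caveat (``if necessary inserting NTL-style locks'') concedes that the bare two-AP2T picture may not close. The cut-parity argument from Lemma~\ref{lem:cut_rule} is only as strong as the cuts you can actually exhibit in a fixed planar drawing; invoking it in the abstract does not rule out a tripwire-port--to--toggle-port shortcut until the layout is pinned down.

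The paper takes a different, more modular route: it does not attempt a direct two-AP2T build at all. Instead it spends the earlier lemmas (\ref{lem:NTL-sim}--\ref{lem:crossover}) showing that AP2Ts simulate NWLs and crossovers, and then constructs the NWT from four NWLs, two crossovers, and a single 1-toggle (the last obtained trivially by ignoring one tunnel of an AP2T). The toggle tunnel is a horizontal path through the 1-toggle flanked by locks that force a unique passage in each state; the tripwire tunnel is a U-shaped path through all four tripwires, which flips every lock and hence the external state. This costs many more underlying AP2Ts than you are aiming for, but it reduces the correctness argument to a short case analysis on a concrete figure, with the hard synchronization work already absorbed into the NWL and crossover lemmas. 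If you want to pursue your direct approach, you will need to produce the explicit planar diagram and carry out the case analysis; as written, the proposal has not yet crossed that threshold.
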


\begin{proof}
\begin{figure}
\centering
\includegraphics[width=0.5\textwidth]{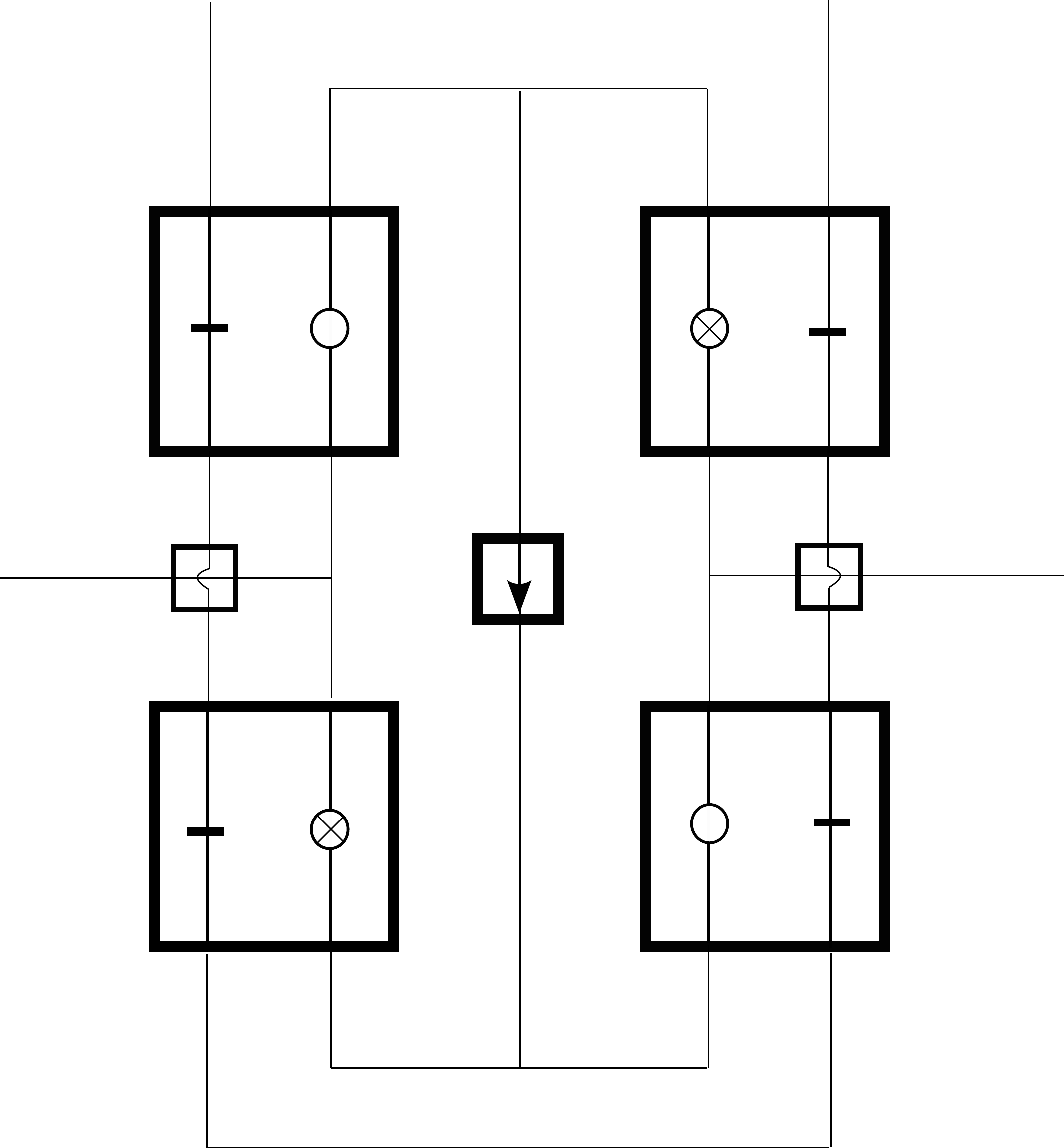}
\caption{A noncrossing wire toggle constructed from a toggle, four noncrossing tripwire locks, and two crossovers.}
\label{fig:NWL-to-NWT}
\end{figure}

We will construct a NWT as shown in Figure~\ref{fig:NWL-to-NWT}. This requires NWLs, crossovers, and 1-toggles. We already have existing constructions of NWLs and crossovers with AP2Ts. We can also build a 1-toggle with an AP2T simply by ignoring one of the two tunnels. Thus, all that's left is to show that the construction successfully simulates a NWT.

There are four accessible states: As shown in Figure~\ref{fig:NWL-to-NWT}, with all of the NWLs flipped, with the toggle flipped, and with everything flipped. The first and last correspond to the external state where the toggle is pointed right, while the other two correspond to the external state where the toggle is pointed right. The horizontal tunnel corresponds to the toggle, while the U-shaped tunnel corresponds to the tripwire in the composed gadget. In the state shown in the figure, the toggle is oriented to the right from the external perspective.

    Clearly, traversing the U-shaped tunnel will flip all of the tripwires of the NWL, resulting in a state which corresponds to the opposite external state, as desired.

    In the state shown in the figure, the horizontal tunnel may be traversed from left to right along a unique pathway due to the placement of the locks, flipping the toggle along the way. The orientation of the toggle blocks the right to left traversal. Thus, in this state, the upper tunnel may be traversed in one direction resulting in an allowed state which corresponds to the opposite external state, as desired.

    Placing the toggle in the opposite state is equivalent to a rotation by $\pi$ of the upper tunnel, showing this state also correctly simulates an NWT.

    Flipping the states of all of the NWLs is equivalent to a vertical reflection of the upper tunnel, showing this state also correctly simulates an NWT.
\end{proof}

\begin{lemma}
    \label{lem:P2T-A}
	P2Ts simulate an AP2T.
\end{lemma}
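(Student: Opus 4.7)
The plan is to exhibit two P2Ts, denoted $T_1$ and $T_2$, placed side by side in the plane and linked by two internal connections, and to show that this composite simulates an AP2T. The key observation is that a P2T and an AP2T differ only in the relative orientation of their two toggles (parallel versus antiparallel), so it should suffice to combine two P2Ts whose internal states are forced to always be opposites: then a composite tunnel built from one tunnel of each P2T will inherit the forward direction of one in tandem with the backward direction of the other, producing the desired antiparallel external behavior.

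Concretely, I would place $T_1$ on the left and $T_2$ on the right, initialize them in opposite states, and add two internal connections: the top-right location of $T_1$ to the top-right location of $T_2$ (with the wire drawn over the top of $T_2$), and the bottom-left location of $T_1$ to the bottom-left location of $T_2$ (with the wire drawn under the bottom of $T_1$). These two wires live in disjoint regions of the plane, so they avoid crossing each other. The four remaining locations---the top-left of each $T_i$ and the bottom-right of each $T_i$---serve as the AP2T's external locations $A, B, D, C$ in the planar cyclic order demanded by the AP2T.

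To verify correctness, I would trace the four expected external traversals (one direction per tunnel in each of the two reachable composite states), confirming that each traversal routes through exactly one tunnel of each $T_i$, flips both P2Ts, and matches the corresponding AP2T transition. Since each traversal flips both P2Ts simultaneously, the invariant ``$T_1$ and $T_2$ are in opposite states'' is preserved from the initial configuration, so the reachable composite states are precisely the two in which the two P2Ts disagree. I would then invoke Lemma~\ref{lem:bijective_composition} to conclude that the composite is automatically deterministic and reversible, guaranteeing that the intended traversals are the only ones possible and that the ``same-state'' configurations $(1,1)$ and $(2,2)$ remain unreachable.

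The hard part will be the planarity of the construction: naively reversing one tunnel of a P2T to convert parallel behavior into antiparallel behavior would appear to require a crossover, which we do not yet have at our disposal. Routing the two internal wires on opposite sides of the arrangement (one above $T_2$, one below $T_1$) is the trick that sidesteps this obstacle, and the remaining check is that the four external locations do indeed end up in the correct clockwise order $A, B, D, C$ around the composite boundary.
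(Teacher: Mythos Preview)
Your two-P2T construction is functionally sound---the traversal analysis and the appeal to Lemma~\ref{lem:bijective_composition} are correct, and with $T_1,T_2$ in opposite states the composite does behave like an AP2T. The gap is planarity, which is essential here: the paper explicitly requires all simulations to be planar so that the final PSPACE-hardness applies to planar systems without a separately-built crossover.

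Your embedding cannot place all four external locations on the outer face. Abstractly, you have two disks $T_1,T_2$ joined by two connection-arcs $e_1=\text{TR}_1\text{TR}_2$ and $e_2=\text{BL}_1\text{BL}_2$; on the sphere this map has exactly two faces. Tracing them using the fixed cyclic orders $(\text{TL},\text{TR},\text{BR},\text{BL})$ at each disk shows one face carries $\{\text{TL}_1,\text{BR}_2\}$ and the other carries $\{\text{TL}_2,\text{BR}_1\}$ (reflecting $T_2$ only swaps which pair of same-index labels share a face). Concretely, in your described drawing the wire over $T_2$ and the wire under $T_1$ together with the two gadget boundaries enclose a bounded pocket containing $\text{TL}_2$ and $\text{BR}_1$; they are not reachable from the unbounded region, so they cannot serve as external locations of a planar composite. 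Your ``one wire above, one wire below'' routing avoids the two wires crossing \emph{each other}, but it does not solve the real obstruction, which is that the four external pins are split $2{+}2$ between the two faces regardless of how you route.

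This is exactly why the paper's construction is so much heavier: it uses six P2Ts (two outer ``caps'' and four inner P2Ts arranged around a central loop) with branching hallways, and argues carefully about the unique successful traversal from each entrance. The extra gadgets are there precisely to reroute the tunnels so that a genuine antiparallel pair of toggles emerges with all four endpoints on the outer boundary. If you want to repair your approach, you will need either a crossover (not yet available at this point in the chain of simulations) or additional P2Ts to perform that rerouting; a bare pair of P2Ts cannot do it planarly.
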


\begin{proof}
\begin{figure}
\centering
\includegraphics[width=0.7\textwidth]{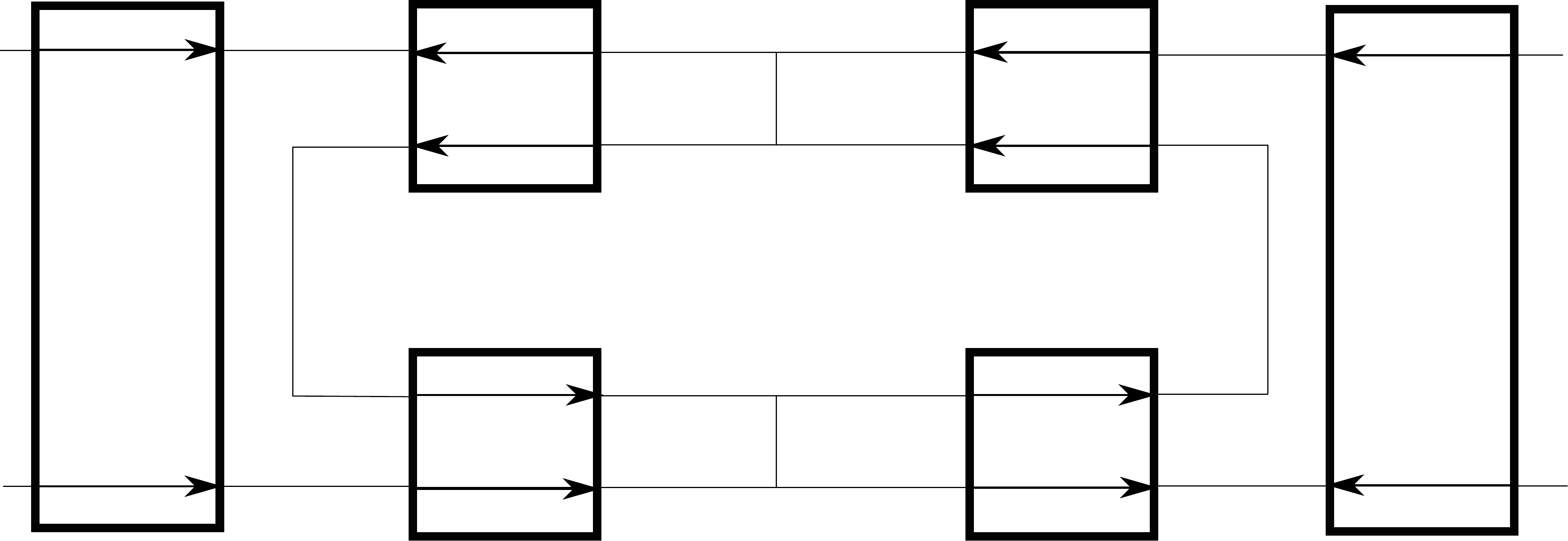}
\caption{Parallel 2-toggles simulate anti-parallel 2-toggles}
\label{fig:P2T-to-AP2T}
\end{figure}

  Figure~\ref{fig:P2T-to-AP2T} gives a construction of an antiparallel-2-toggle out of parallel-2-toggles.

There are two accessible states: As shown, and with the four inner P2Ts flipped. The former corresponds to the AP2T having a tunnel connecting the left two locations with its toggle oriented upward, and a tunnel connecting the right locations with its toggle oriented downward, while the latter corresponds to the two toggles flipped.

First, let us examine the bottom right location in the state shown in the figure. After passing the rightmost P2T, the robot is blocked. No transitions or state changes are possible. This matches the desired behavior, because the right toggle in the AP2T being simulated is oriented down.

Next, let us examine the top right location in the state shown in Figure~\ref{fig:P2T-to-AP2T}. After passing the rightmost P2T, then the upper right P2T, the robot may now either proceed along the top tunnel, or down to the central loop. In the former case, the robot may pass through the upper left P2T, but then is blocked. In the later case, the robot may either proceed around the loop to the left or to the right. If the robot goes to the right, it can pass through the lower tunnel of the upper right P2T, but then is stuck. If the robot goes to the left, it can pass through the lower tunnel of the upper left P2T, then the upper tunnel of the lower left P2T.

At this point, the robot may either continue around the loop, or exit the loop downward. If the robot continues around the loop, it can pass through the upper tunnel of the lower right P2T, but then is stuck. If it exits the loop, it can either go left or right on the bottom tunnel. If it goes left, it can pass through the lower tunnel of the lower left P2T, but then is stuck. If it goes right, it can pass through the lower tunnel of the lower right P2T, then the lower tunnel of the rightmost P2T, and exit the gadget.

Overall, we observe that the robot can make exactly one transition, from top right to bottom right. The right toggle is traversed twice, and the inner toggles are all traversed once, leaving the gadget in the other accessible state. No other transition or state change is possible, from that entrance.

Since the gadget is rotationally symmetric about its center, the possible transitions from the right mirror the possible transitions from the left. Since the other state is simply the state shown in the figure mirrored top-to-bottom, the transitions described mirror the transitions in the other state as well.
\end{proof}

\begin{lemma}
	\label{lem:NTL}
    \label{lem:NTL-A}
	NTLs simulate an AP2T.
\end{lemma}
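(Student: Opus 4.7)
The plan is to give an explicit planar construction of an AP2T from a small constant number of NTLs together with branching hallways, and then verify correctness by tracing traversals and appealing to parity (cut) arguments. The key coupling available in an NTL is that its lock state is tied to its toggle direction: one toggle direction corresponds to ``lock open,'' the other to ``lock closed.'' By routing a portion of one NTL's access path through the other NTL's lock, a traversal that flips one NTL's toggle can be forced to also flip the other NTL's toggle, giving the lockstep state change that characterizes the AP2T.

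Concretely, I would place two NTLs so that their toggles (arranged antiparallel) realize the two toggles of the AP2T, with the four external locations of the AP2T identified with the four toggle endpoints. The two NTLs' locks are then used as gates on the ``cross-over'' portion of each AP2T tunnel so that the robot can only complete a traversal by passing through both toggles in sequence. The intended ``state~1'' of the composite is then the configuration in which NTL$_A$ has toggle up and lock open while NTL$_B$ has toggle down and lock open (i.e., antiparallel toggles, both locks open from the composite's viewpoint after accounting for which direction each NTL is oriented); ``state~2'' is its mirror. If needed, I would allow myself to add one or two more NTLs in the layout to separate the four external locations and to absorb any parasitic paths.

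Verification then has three parts. First, in each of the two intended states, explicitly trace the two legal AP2T traversals through the toggles and locks of the construction, checking that toggle directions and lock states line up at every step; because both NTLs are traversed exactly once along each such path, both flip together and the composite moves to the other intended state. Second, apply Lemma~\ref{lem:bijective_composition} to conclude that the composite is deterministic and reversible, so each external entry admits at most one traversal in each state. Third, rule out spurious traversals and unintended reachable states via the cut argument of Lemma~\ref{lem:cut_rule}: the two toggle locations of each NTL form a cut of the planar gadget, so every completed traversal crosses each such cut an even number of times, forcing each NTL to be flipped an even number of times per excursion; combined with the routing-through-locks constraint, this confines reachable states to the two intended configurations and traversals to the two intended AP2T moves per state.

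The main obstacle will be balancing two opposing design constraints in the planar layout: the four external locations must be genuinely distinct (no two identified via a branching-hallway shortcut or an always-open lock path), yet every valid traversal must be forced to flip both NTLs. Blocking shortcuts requires inserting enough locks on every cross-route, but each inserted lock introduces another sub-state that could either prevent a legitimate traversal at the wrong moment or leave the gadget in an invalid intermediate state. Once a layout that cleanly couples the lock-open state of one NTL to the toggle-entry point of the other is found, the remainder of the proof — the traversal trace plus the cut/parity argument — is routine, mirroring the style of the verifications in Lemmas~\ref{lem:NTL-sim} and~\ref{lem:RP2TL}.
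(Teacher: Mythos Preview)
Your proposal has a genuine structural gap, and the parity argument you plan to invoke actually contradicts the behavior you need.

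First, the construction sketch cannot work as stated. You want the four external AP2T locations to be the four toggle endpoints of two NTLs, with each AP2T tunnel realized by one NTL's toggle. But then a traversal of one tunnel flips only that NTL; to flip the other NTL you say the robot is forced ``through both toggles in sequence.'' If the robot passes through both toggles, it exits at an endpoint of the \emph{other} NTL's toggle, not at the far end of the tunnel it entered --- so the two AP2T tunnels would not be disjoint pairs of locations at all. Routing through the other NTL's \emph{lock} instead does not help, since traversing a lock leaves the state unchanged. There is no way to reconcile ``external locations $=$ toggle endpoints'' with ``every traversal flips both NTLs'' using only two NTLs and locks-as-gates; adding ``one or two more NTLs'' without saying how is not a construction.

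Second, your intended parity argument is self-defeating. You write that the cut argument will force ``each NTL to be flipped an even number of times per excursion.'' If that held, every completed traversal would return every NTL to its starting state, so the composite gadget would have only one reachable state --- and hence could not be an AP2T, whose defining feature is that each traversal flips the state. The cut lemma only applies to regions containing no external locations; once the toggle endpoints are external, the toggle tunnel is no longer a valid cut in the sense of Lemma~\ref{lem:cut_rule}.

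The paper's construction takes a very different route. It does \emph{not} identify AP2T tunnels with individual NTL toggles. Instead, it builds a central loop carrying the toggle tunnels of several NTLs, with $1$-toggles (obtained from an NTL by ignoring its lock tunnel) guarding entry to and exit from the loop so that the robot must leave the loop where it entered, having gone around zero or one full times. The \emph{locks} of the NTLs sit on the external top and bottom pathways and determine which direction each pathway can be entered. A single lap around the loop flips every NTL, swapping which side of each external pathway is open --- exactly the AP2T state change. The key idea you are missing is this loop-with-guarded-access pattern, which decouples ``where the robot enters and exits'' from ``which gadgets get flipped.''
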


\begin{proof}
\begin{figure}
\centering
\includegraphics[width=0.7\textwidth]{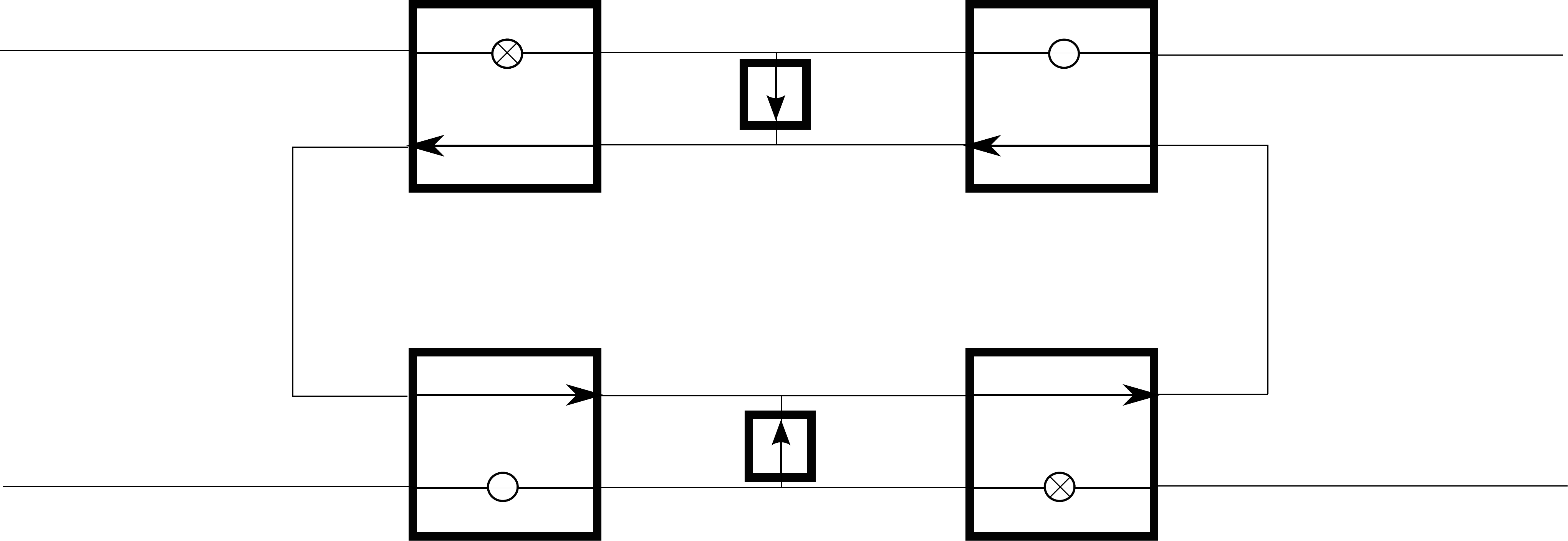}
\caption{Noncrossing-toggle-lock simulates anti-parallel-2-toggle}
\label{fig:1TL-to-AP2T}
\end{figure}

The construction is shown in Figure~\ref{fig:1TL-to-AP2T}. The two accessible states are the state shown in the figure and the state with all of the NTLs flipped, but the one-toggles still oriented inward. These correspond to an AP2T with the top tunnel directed left and bottom tunnel directed right, and the left-right mirror image.

If the robot enters from the top right, after passing the lock of the top right NTL, it must pass the upper one-toggle and proceed into the central loop. Since the lower toggle is directed upward, the robot must eventually leave the central loop via the upper toggle. The robot may now proceed around the loop. The loop may only be traversed counterclockwise, and it may only be traversed once. The robot may of course backtrack at any point, but when it leaves via the upper toggle, it must have either traversed the loop zero or one times. In the former case, the robot must leave via the top right location, leaving the system in its original state. In the latter case, the robot must leave via the top left location, as all of the locks have flipped. Thus, the top tunnel may be traversed via a right to left traversal, flipping the state, and that is the only traversal in that direction.

If the robot enters from the top left, it is immediately blocked by the lock, and no traversal is possible. Thus, the top tunnel works as desired.

Since the gadget possesses rotational symmetry around its center, the bottom tunnel is exactly the same, allowing only a left to right traversal, flipping the state.

The opposite state is the same as the original state except for a left-right right mirror reversal, so it also functions exactly as desired from the AP2T.
\end{proof}

\begin{lemma}
    \label{lem:NWT-A}
	NWTs simulate an AP2T.
\end{lemma}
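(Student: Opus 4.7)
The plan is to simulate the AP2T by combining two NWTs so that each of the AP2T's two simulated toggles is realized by forcing the robot through one NWT's toggle tunnel together with the other NWT's tripwire tunnel. I would place NWT$_1$ and NWT$_2$ with anti-parallel toggle orientations and route one external pair of the simulated AP2T through NWT$_1$'s toggle followed by NWT$_2$'s tripwire, identifying the ``inner'' endpoint of the NWT$_1$ toggle with one tripwire endpoint of NWT$_2$; the other external pair is routed symmetrically through NWT$_2$'s toggle and NWT$_1$'s tripwire. The anti-parallel orientation of the two individual NWT toggles directly yields the anti-parallel orientation of the two simulated toggles.

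First I would identify the two accessible states of the construction (both NWTs ``synchronized'' to reflect the two external AP2T states) and check that under this identification the four external locations correspond to the four locations of an AP2T. Next, for each external entrance in each state, I would enumerate the possible walks: an entrance in the direction allowed by the relevant toggle commits the robot to a forced path that traverses exactly one toggle tunnel (flipping that NWT) followed by exactly one tripwire tunnel (flipping the other NWT), exiting at the unique paired location and leaving the construction in the opposite accessible state; an entrance in the wrong direction is blocked at the toggle, so the robot can at most induce a tripwire flip that must be immediately undone before escaping. Cross-traversals between the left and right external pairs are ruled out by the fact that the two ``toggle + tripwire'' routes share no interior vertex.

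The main obstacle is ruling out partial traversals that could leave the construction in an inaccessible configuration where only one of the two NWTs has been flipped. For this I would invoke the cut rule of Lemma~\ref{lem:cut_rule}: the two toggle tunnels each form a cut, as do the two tripwire tunnels, so any entrance-to-exit walk must cross each cut an even number of times, forcing both NWTs to end in the same flipped state (or both unchanged). Combined with the forced-path analysis above and Lemma~\ref{lem:bijective_composition} (which guarantees that the composition of deterministic reversible gadgets is deterministic), this shows that the only accessible external transitions are exactly those of an AP2T, completing the simulation.
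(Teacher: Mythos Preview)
Your construction is exactly the paper's: two NWTs arranged so that each tunnel of the simulated AP2T is one NWT's toggle in series with the other NWT's tripwire, with the two toggles oriented anti-parallel; traversing either tunnel flips both NWTs, and the two tunnels share no interior vertex. The paper's proof is a two-sentence version of your first two paragraphs.

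One small correction: your invocation of Lemma~\ref{lem:cut_rule} is both unnecessary and misapplied. On a bare two-edge path $A$--toggle--tripwire--$B$, a single tunnel does not bound a region free of external locations (one side contains $A$, the other contains $B$), so the even-crossing conclusion does not follow for an individual tunnel; indeed, the legitimate $A\to B$ traversal crosses each tunnel once. This does no damage to your argument, however, because the absence of branching hallways together with Lemma~\ref{lem:bijective_composition} already makes the composition deterministic and reversible, which by itself rules out any ``one NWT flipped'' end state. You can simply drop the cut-rule paragraph.
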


\begin{proof}
A noncrossing wire toggle can simulate an anti-parallel 2-toggle with the simple construction shown in Figure~\ref{fig:NWT-to-AP2T}. The direction of each tunnel is dictated by the toggle on the tunnel, and the wire ensures both toggles are synchronized. Thus when either tunnel is traversed, both NWTs flip and the direction each tunnel can be traversed flips.
\end{proof}
\begin{figure}
\centering
\includegraphics[width=0.7\textwidth]{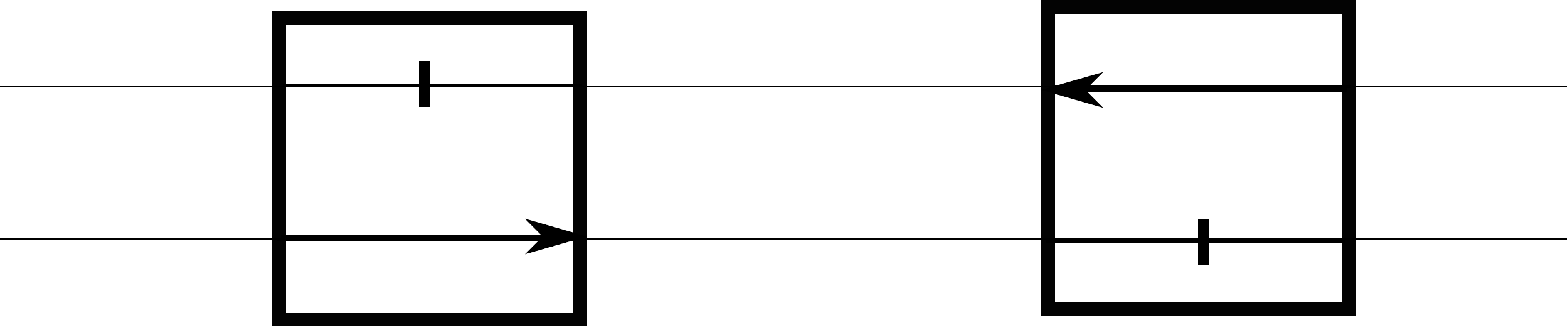}
\caption{Noncrossing-wire-toggle simulates anti-parallel-2-toggle}
\label{fig:NWT-to-AP2T}
\end{figure}

\begin{lemma}
    \label{lem:NWL-A}
	NWLs simulate an AP2T.
\end{lemma}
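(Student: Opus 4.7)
The plan is to exhibit a small planar construction using a constant number of NWLs that is equivalent to an AP2T. The AP2T has two states which I will encode as two global configurations of the NWL locks: in one configuration the locks on the ``top'' path permit left-to-right passage while the locks on the ``bottom'' path permit right-to-left passage, and in the other configuration all of these are flipped. Each toggle of the AP2T will be built by combining a lock tunnel (which provides the one-way blocking behavior) with tripwires (which coordinate the state change between the two toggles so that they always flip together).

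First I would describe the construction explicitly: for each of the two toggles, allocate a pair of NWLs so that the lock tunnel of one NWL sits on the main through-path and the tripwires carry a state signal to the partner NWL on the opposite toggle. By routing the unique valid through-path so that it crosses every tripwire exactly once, any legal traversal will flip all NWL locks simultaneously, which realizes the coupled state change of the AP2T. I would then state the two accessible states precisely in terms of which NWLs are locked or unlocked and declare the correspondence to the two AP2T states.

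Next I would enumerate the possible traversals from each of the four external locations in each state. For each entry point I would trace the forced path through the construction, showing that (i) entering from the ``forward'' direction on a toggle leads through an unlocked lock tunnel and across exactly the right set of tripwires to exit at the paired location, with every NWL flipped on the way, and (ii) entering from the ``backward'' direction immediately encounters a locked lock tunnel, so no traversal is possible. The branching hallway is available at interior junctions to route the robot as needed.

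The main obstacle is ruling out spurious traversals and stray accessible states, since a careless arrangement of NWLs lets the robot loop through tripwires and leave the construction in an intermediate configuration that does not correspond to either AP2T state. To handle this, I would invoke the cut argument of Lemma~\ref{lem:cut_rule} on suitable planar regions whose boundaries are pairs of lock or tripwire tunnels, showing that the parity of crossings forces the NWLs on corresponding sides to always flip together; this confines the accessible states to the two intended ones. Determinism and reversibility of the composed system follow from Lemma~\ref{lem:bijective_composition}, which guarantees that each external entry has a unique forced continuation and that the only way out is the one traced above. Combined with the symmetry between the two states (they differ by a reflection), this verifies that the construction behaves exactly as an AP2T.
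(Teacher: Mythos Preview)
Your proposal is a plan rather than a proof: it never exhibits the construction, and the plan as sketched has a real flaw. You write that ``the locks on the top path permit left-to-right passage'' in one state and the reverse in the other, as if a lock tunnel supplies \emph{directional} blocking. It does not: an NWL's lock tunnel is symmetric---either passable in both directions or in neither. So placing lock tunnels on the through-paths and tripwires elsewhere cannot by itself produce the one-way behavior of a toggle, and the step ``entering from the backward direction immediately encounters a locked lock tunnel, so no traversal is possible'' will not hold unless you add some further mechanism.

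The paper's proof supplies exactly that missing mechanism. The key idea you are lacking is that a single NWL already simulates a \emph{1-toggle}: connect one endpoint of the tripwire to one endpoint of the lock. Then from the tripwire side the robot closes the lock ahead of itself (blocked), while from the lock side a closed lock blocks entry but an open lock lets the robot through and then across the tripwire, flipping the state. With 1-toggles available, the paper's construction places all the NWL tripwires on a single inner loop guarded by 1-toggles so that the loop must be exited where it was entered; this forces every tripwire to be crossed the same number of times on any traversal and keeps all NWLs synchronized. The top and bottom through-paths each carry one locked and one unlocked lock tunnel, so only one direction is open on each, and a single circuit of the loop flips everything. Your appeal to Lemma~\ref{lem:bijective_composition} is also misplaced here: that lemma applies only to compositions of deterministic reversible gadgets, and since you explicitly use branching hallways (which are nondeterministic), it does not certify uniqueness of the forced continuation. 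The paper instead argues the traversal analysis directly.
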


\begin{proof}
The construction of an anti-parallel 2-toggle from non-crossing tripwire locks can be seen in Figure~\ref{fig:NWL-to-AP2T}. Note that a 1-toggle can be constructed from an NWL by simply connecting one location of the wire to one location of the lock. A closed lock will prevent travel in one direction, but crossing the tripwire in the other direction will open the lock and allow the robot to proceed. An open lock will allow travel in the other direction. In the direction starting from the tripwire, the tripwire will close the lock in front of the robot preventing traversal. In either traversal, the tripwire is crossed, flipping the state.

	There are two main parts to this gadget, the top and bottom tunnels, and the inner loop. As with the NTL construction from Lemma~\ref{lem:NTL}, the 1-toggles ensure that the loop must be exited from the same place it was entered, which ensures all gadgets on the loop are traversed the same number of times. Since all wires are on this loop, in a given traversal of this gadget system, all of the NWLs will change state the same number of times, keeping them in sync. The upper and lower paths each contain a locked and unlocked tunnel. The locked portion prevents entry and interaction with the gadget. From the unlocked side, the robot is able to enter the gadget and flip its state an arbitrary number of times. If the state is flipped an even number of times, the robot's only path out is the way it came. If an odd number of flips have occurred, the robot can now exit through the opposite side of its path, leaving the gadget in the opposite state.

	Therefore, the gadget may traversed right to left along the top tunnel, flipping the state, and left to right along the bottom tunnel, flipping the state. We have built an AP2T.
	\end{proof}
\begin{figure}
\centering
\includegraphics[width=0.7\textwidth]{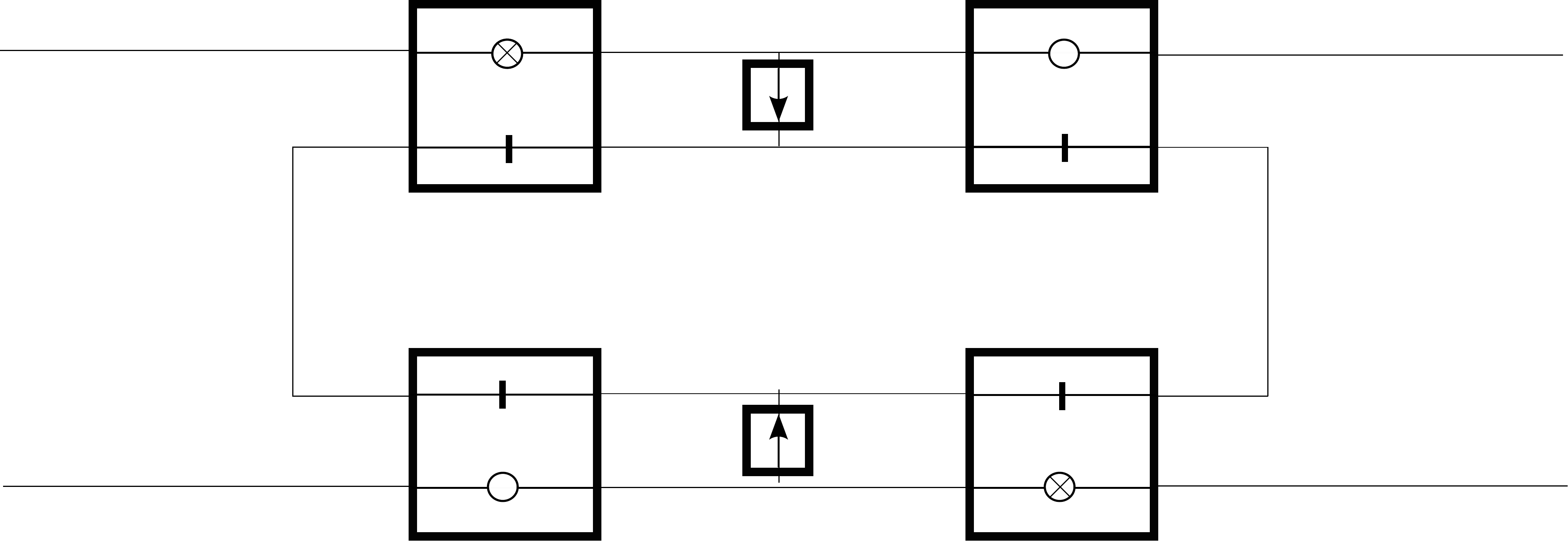}
\caption{Noncrossing-wire-lock simulates anti-parallel-2-toggle}
\label{fig:NWL-to-AP2T}
\end{figure}

\begin{lemma}
    \label{lem:uncross}
  CWTs simulate an NWT, CWLs simulate an NWL, CTLs simulate an NTL.
\end{lemma}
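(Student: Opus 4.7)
The plan is to simulate each non-crossing gadget by chaining two copies of its crossing counterpart. The underlying observation is topological: the difference between the crossing and non-crossing versions of a 2-tunnel gadget is purely the cyclic order of the four boundary locations, and two copies of a crossing gadget contribute two internal tunnel crossings that can be drawn to cancel, so connecting them tunnel-to-tunnel yields an externally non-crossing layout.

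First I would take two copies $G_1$, $G_2$ of the crossing gadget, placed side by side and started in the same state, and connect one endpoint of each tunnel of $G_1$ to one endpoint of the corresponding tunnel of $G_2$: wire to wire and lock to lock for CWL, toggle to toggle and lock to lock for CTL, and wire to wire and toggle to toggle for CWT. The four remaining external locations can then be drawn in the non-crossing cyclic order, because the two internal crossings of $G_1$ and $G_2$ geometrically undo each other in the combined picture.

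Next I would verify that the combined gadget realizes the target non-crossing gadget by direct case analysis. Each external tunnel of the combined gadget traverses the corresponding tunnel in both copies, so any full traversal flips (or tests) the state of $G_1$ and $G_2$ in lockstep, preserving the invariant that the two copies are always in the same internal state. Under this invariant the combined gadget's behavior is exactly that of the non-crossing gadget: for CWL, the external lock is traversable iff both copies are unlocked and the external wire is always traversable and flips both locks; for CTL, the external lock is traversable iff both copies are unlocked and the external toggle is directed consistently, with each traversal flipping both locks; for CWT, traversing either external tunnel flips the shared state.

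The main obstacle is to rule out ``partial'' traversals in which the robot enters $G_1$, turns around inside the construction, and exits in a way that leaves $G_1$ and $G_2$ desynchronized. This is handled by Lemma~\ref{lem:bijective_composition}: each crossing gadget is deterministic and reversible, so the composition is deterministic and reversible as well. Hence at each external location in each state at most one transition is available, and a short enumeration confirms that the available transitions are precisely those of the target non-crossing gadget, completing the simulations.
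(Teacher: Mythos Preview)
Your proposal is correct and is essentially the paper's approach: chain two copies of the crossing gadget so that the two internal crossings cancel and the external tunnels become parallel. The paper's own proof is just a one-line reference to the $C2T\to P2T$ picture and the assertion that ``the same construction works'' for tripwire--toggles, tripwire--locks, and toggle--locks; your write-up is in fact more careful, since you make explicit that the connections must be type-to-type (toggle to toggle, lock to lock, wire to wire), that the two copies must start synchronized, and that Lemma~\ref{lem:bijective_composition} rules out partial traversals that could desynchronize them. The only point you leave slightly implicit is that making the type-to-type connections planar requires reflecting (or rotating) $G_2$; since the paper treats a gadget and its reflection as the same planar gadget type, this is harmless.
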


\begin{wrapfigure}{R}{3in}
  \centering
  \includegraphics[width=\linewidth]{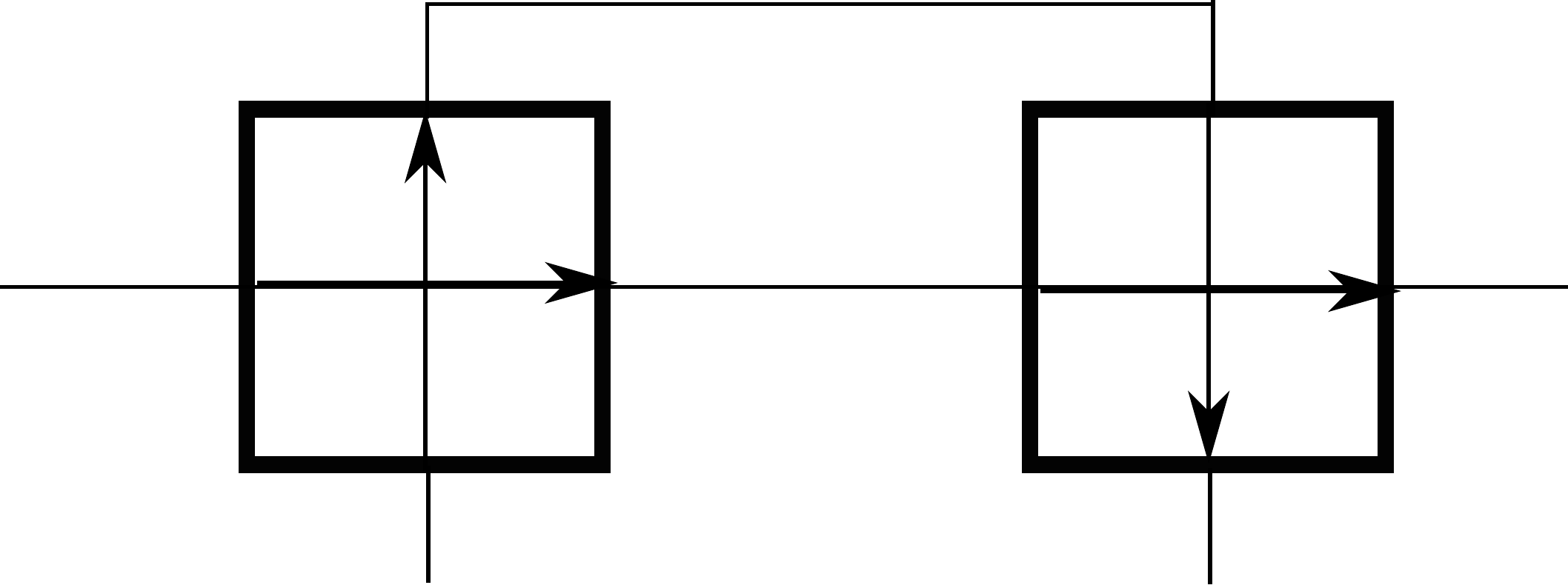}
  \caption{Crossing 2-toggles simulate parallel 2-toggle}
  \label{fig:C2T-to-N2T}
  \vspace{-5mm}
\end{wrapfigure}

  In general, one can very easily simulate a non-crossing version of a 2-tunnel gadget from the crossing version. Figure~\ref{fig:C2T-to-N2T} shows a parallel-2-toggle being constructed from a crossing-2-toggle. The same construction works for uncrossing the other gadgets we have analyzed, namely tripwire-toggles, tripwire-locks and toggle-locks. Going from non-crossing to crossing versions is significantly more complicated (except in the case of anti-parallel-2-toggle to crossing-2-toggle) but we are rescued from the need of such constructions by being able to simulate a general crossover in Lemma~\ref{lem:crossover}.

\section{Applications}
\label{sec:applications}
In this section, we give two examples of actual puzzle games
where our motion planning framework applies, establishing PSPACE-hardness.

\subsection{Spinners}
\label{sec:spinners}
A $k$-spinner is a two state deterministic reversible gadget on $k$ locations. In one state, each location is connected to its neighbor by a directed edge in a clockwise direction. In the other state, all locations are likewise connected in a counterclockwise direction. A $4$-spinner is shown in Figure~\ref{fig:zelda_spinner}. The study of $4$-spinners was posed by Jeffrey Bosboom due to their appearance in The Legend of Zelda: Oracle of Seasons. We show that for any $k\geq4$, path-planning problems with $k$-spinners and branching hallways is PSPACE-complete.

First, we can take a $k$ spinner and have all but three consecutive locations lead to dead ends. The remaining three locations form a gadget that we call a deterministic fork. A deterministic fork is a reversible, deterministic gadget on three locations. In one state, it allows the robot to go from the center to the right location and return from the left to the center location. In the other state these directions are reversed. Figure~\ref{fig:4-spinner-to-2-toggle-with-spinner} shows the construction of a crossing 2-toggle from two $4$-spinners or equivalently two deterministic forks.

\begin{figure}
  \centering
  \begin{minipage}{0.48\textwidth}
    \centering
    \includegraphics[width=0.7\textwidth]{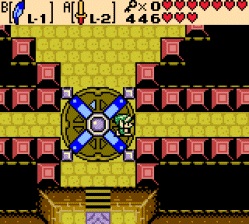}
    \caption{Example of a 4-spinner in The Legend of Zelda: Oracle of Seasons.}
    \label{fig:zelda_spinner}
  \end{minipage}\hfill
  \begin{minipage}{0.48\textwidth}
    \centering
    \includegraphics[width=0.7\textwidth]{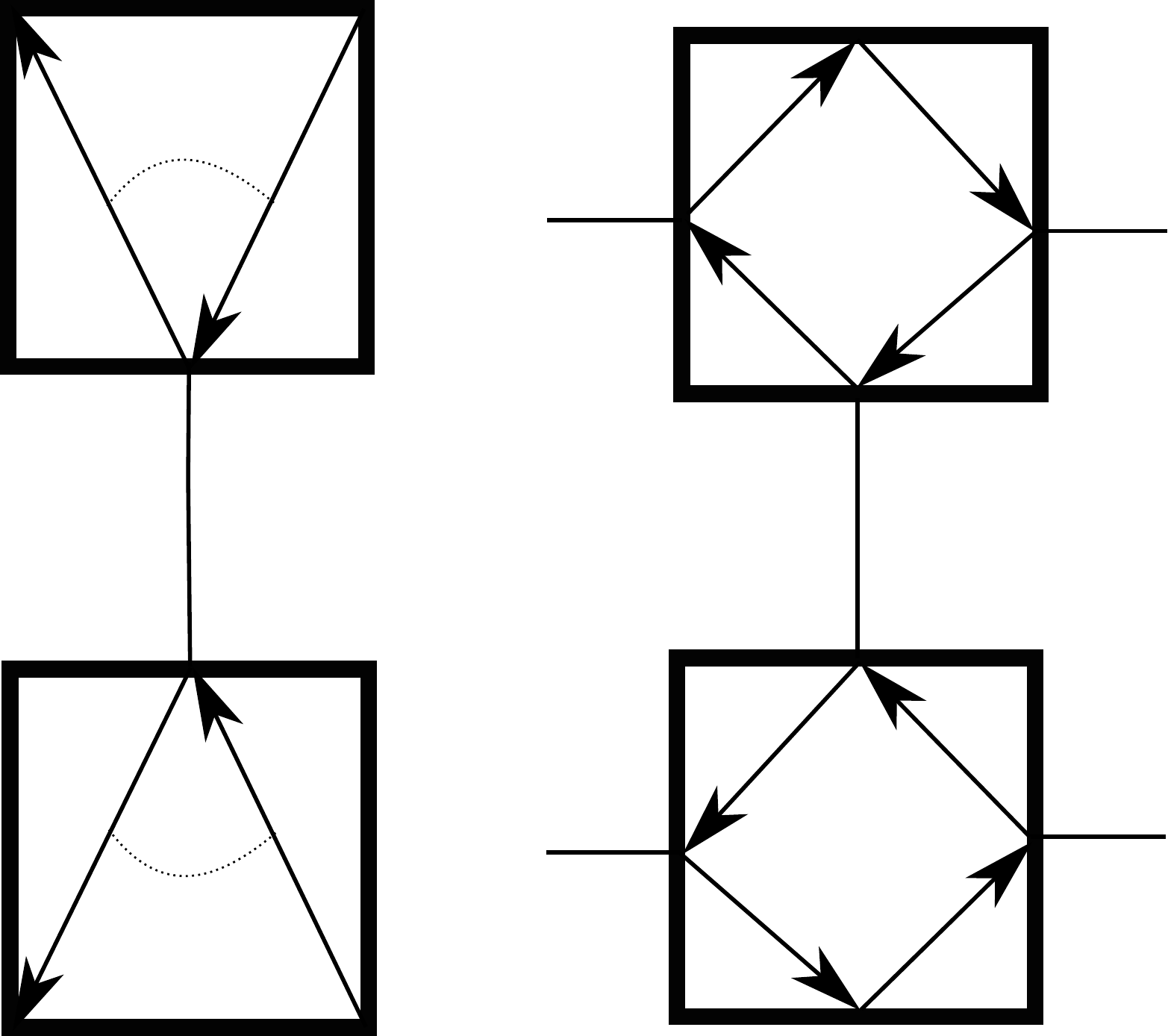}
    \caption{4-spinners simulate deterministic forks which simulate crossing 2-toggles}
    \label{fig:4-spinner-to-2-toggle-with-spinner}
  \end{minipage}
\end{figure}

\begin{theorem}
  \label{thm:k-spinners}
    For any $k\geq4$, the path-planning problem with $k$-spinners and branching hallways is PSPACE-complete.
\end{theorem}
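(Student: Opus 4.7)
The plan is to reduce from motion planning with crossing 2-toggles, which is PSPACE-hard by Corollary~\ref{thm:all-complete}. Figure~\ref{fig:4-spinner-to-2-toggle-with-spinner} already packages the case $k = 4$: two $4$-spinners used as deterministic forks compose into a C2T. So the only new ingredient needed is to show that, for every $k \geq 4$, a single $k$-spinner simulates a deterministic fork; PSPACE membership is immediate from Lemma~\ref{lem:in-pspace}.

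For the fork simulation, I would designate three consecutive locations $0, 1, 2$ of the $k$-spinner as the fork's ``left'', ``center'', and ``right'' external locations, leaving the remaining $k - 3$ locations $3, 4, \ldots, k-1$ unconnected in the surrounding system. Writing the spinner's deterministic reversible matching as $(C, i) \leftrightarrow (C', i + 1 \bmod k)$, with $C$ clockwise and $C'$ counterclockwise, the pairs $(C, 0) \leftrightarrow (C', 1)$ and $(C, 1) \leftrightarrow (C', 2)$ produce exactly the fork's left-to-center and center-to-right traversals (each flipping the state to $C'$), and symmetric pairs handle state $C'$. The two remaining boundary-touching pairs $(C, 2) \leftrightarrow (C', 3)$ and $(C', 0) \leftrightarrow (C, k-1)$ send the robot to a dead-end location, from which the only available traversal is the matched reverse that returns the robot to its starting external location in the original state---a complete no-op. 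Interior pairs $(C, i) \leftrightarrow (C', i+1)$ for $3 \leq i \leq k-2$ connect only internal (state, location) pairs and are unreachable from outside. Hence the externally observable behavior of the restricted spinner coincides with that of a deterministic fork.

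Composing this with Figure~\ref{fig:4-spinner-to-2-toggle-with-spinner}'s construction of a C2T from two deterministic forks---whose verification is a short case analysis over the four external locations in the two accessible states---gives a constant-size simulation of a C2T using $k$-spinners and branching hallways. Corollary~\ref{thm:all-complete} then yields PSPACE-hardness, and together with Lemma~\ref{lem:in-pspace} this gives PSPACE-completeness.

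The main point to check carefully is the dead-end argument: since a dead-end location has no external connections, once the robot is there its only move is the matched reverse, so the only new internal pairs it can ever reach are $(C', 3)$ and $(C, k-1)$, both of which immediately bounce back. Without this reachability check one might worry the robot could chain dead-end excursions into an unintended traversal across the ring, so this is precisely where the hypothesis $k \geq 4$ (equivalently, at least one dead-end location) is used.
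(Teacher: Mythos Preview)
Your proposal is correct and follows the same approach as the paper: restrict a $k$-spinner to three consecutive locations (leaving the other $k-3$ as dead ends) to obtain a deterministic fork, then use the two-fork construction of Figure~\ref{fig:4-spinner-to-2-toggle-with-spinner} to simulate a C2T and invoke Corollary~\ref{thm:all-complete}. Your write-up is in fact more careful than the paper's terse version, explicitly verifying the dead-end reachability argument and citing Lemma~\ref{lem:in-pspace} for the upper bound.
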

\begin{proof}
We construct a deterministic fork by ignoring $k-3$ of the edges in the spinner. Two deterministic forks together simulate a crossing 2-toggle as shown in Figure~\ref{fig:4-spinner-to-2-toggle-with-spinner}. By Corollary~\ref{thm:all-complete}, the motion planning problem with crossing 2-toggles is PSPACE-complete.
\end{proof}

\begin{corollary}
Determining if a player can beat a level in generalized The Legend of Zelda: Oracle of Seasons is PSPACE-hard.
\end{corollary}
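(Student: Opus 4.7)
The plan is to reduce the $4$-spinner motion planning problem from Theorem~\ref{thm:k-spinners} directly to the problem of beating a level in a generalized (arbitrary-size) version of \emph{The Legend of Zelda: Oracle of Seasons}. The reduction is largely an ``embedding'' argument: I must show that the three ingredients of a $4$-spinner motion planning instance (the spinner gadgets themselves, the branching hallways, and the connections between them) can each be realized as actual in-game features, and that the resulting level is solvable by the player if and only if the corresponding motion planning instance has a solution.

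First I would exhibit the spinner gadget: as illustrated in Figure~\ref{fig:zelda_spinner}, \emph{Oracle of Seasons} natively contains a $4$-spinner mechanic in which Link stands on a cross-shaped platform with four exits, and the platform can be rotated clockwise or counterclockwise to control which pair of exits is currently ``entrance-to-exit'' accessible. This exactly matches the state space of the $4$-spinner defined in Section~\ref{sec:spinners}. Next I would realize branching hallways: Link can walk through empty room tiles and corridor junctions, and a plain $T$- or $Y$-shaped intersection of walkable floor tiles behaves exactly as a branching hallway gadget. Connections between locations are implemented by straight corridors of floor tiles, and since Theorem~\ref{thm:k-spinners} gives hardness already for planar systems (our entire framework yields planar reductions), the whole instance can be laid out in the 2D tilemap of a dungeon without requiring crossings.

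The start location corresponds to the room where Link begins the level, and the goal location corresponds to a level-exit tile (e.g., the staircase or room transition used to complete the level). Given any $4$-spinner motion planning instance, I would construct a dungeon whose tilemap is the planar embedding of that instance: each spinner becomes a copy of the in-game spinner mechanic, each branching hallway becomes a small junction room, and the matching of locations is realized by corridors. A winning playthrough of the level must correspond exactly to a valid sequence of gadget traversals, because spinners and walls restrict Link's movement in the same way that the gadget semantics restrict the abstract robot. Therefore the level is beatable iff the motion planning instance is solvable, and PSPACE-hardness follows from Theorem~\ref{thm:k-spinners}.

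The main obstacle is the standard caveat for video game hardness proofs: one must be careful that no other mechanic of the game (items, magic, breaking walls, warps, enemies that can be manipulated, save-and-quit tricks, etc.) lets the player bypass the intended gadget semantics. I would address this by restricting to levels where Link has no such items available and by surrounding the constructed tilemap with unbreakable boundary tiles, so that the spinner-and-corridor subgame is the only means of reaching the goal. This is essentially an engineering check rather than a mathematical difficulty, but it is the step that needs the most care to make the reduction rigorous.
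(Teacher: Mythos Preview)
Your proposal is correct and follows essentially the same approach as the paper: both reduce from the PSPACE-hardness of planar $4$-spinner motion planning (Theorem~\ref{thm:k-spinners}) by embedding the instance into a dungeon using the game's native spinner mechanic. The paper's proof is terser and explicitly cites a planar-graph-to-grid embedding result to bound the blowup, whereas you give more detail on realizing branching hallways and guarding against extraneous game mechanics, but the underlying argument is the same.
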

\begin{proof}
The Legend of Zelda: Oracle of Seasons contains $4$-spinners and requires the player to navigate from one location to a target location in a grid. Since planar graphs can be laid out in a grid with only quadratic blowup \cite{layout}, we can reduce from motion planning problems with $4$-spinners which are PSPACE-complete by Theorem~\ref{thm:k-spinners}.
\end{proof}

The complexity of motion planning with $3$-spinners, as well as the two other reversible, deterministic, 2 state, 3 location gadgets, remains open. Since $2$-spinners are the same as an edge in a graph, this would give a tight characterization for the spinner gadget. The authors would also be interested to know what other games and puzzles use spinners.

\subsection{3D Push-1 Pull-1 Block Puzzles}

\begin{wrapfigure}{r}{2.2in}
\centering
\vspace*{-10ex}
\includegraphics[width=\linewidth]{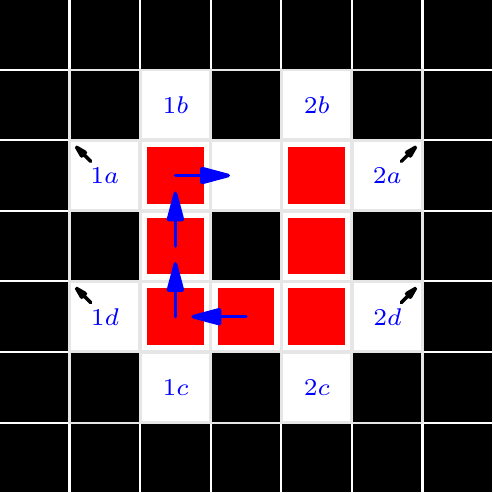}
\vspace*{-3ex}
\caption{2-toggle gadget in 3D Push-1 Pull-1 with fixed blocks.}
\vspace*{-6ex}
\label{fig:push-pull-2-toggle}
\end{wrapfigure}

Our interest in reversible gadgets originated in the study of Push-Pull
Block Puzzles, specifically, Push-1 Pull-1 where the robot can push a
single $1 \times 1$ block on a square grid,
and similarly can pull, the exact reverse of a push operation \cite{us}.
Push-1 Pull-1 has since been proved
PSPACE-complete, even in 2D \cite{Pereira-Ritt-Buriol-2016}.
Nonetheless, our motion-planning framework applies at least to 3D
Push-1 Pull-1 with fixed blocks, providing an alternate proof of
PSPACE-hardness.  Specifically, Figure~\ref{fig:push-pull-2-toggle}
shows how to build a 2-toggle gadget (using just three vertical layers).

\section{General hardness characterization}
\label{sec:general}

  Here, we tightly characterize the hardness of the motion planning problem with all deterministic, reversible, $2$-state, $k$-tunnel gadgets.
  \begin{theorem}
    \label{thm:general}
    Motion planning with any deterministic, reversible, $2$-state, $k$-tunnel planar gadget (with branching hallways) is PSPACE-complete if and only if the gadget has two toggle tunnels, a toggle tunnel and a tripwire tunnel, a toggle tunnel and a lock tunnel or a tripwire tunnel and a lock tunnel. Motion planning with all other such gadgets is in P.
  \end{theorem}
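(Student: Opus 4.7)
The plan is to prove both directions of the characterization. PSPACE membership is immediate from Lemma~\ref{lem:in-pspace}, so the remaining work splits into the hardness reduction and the in-P argument.

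For the hardness direction, suppose the gadget $G$ contains two tunnels $T_1, T_2$ forming one of the four hard pair combinations listed in the theorem. Let $H$ denote the 2-tunnel gadget of type $(T_1, T_2)$ with the planar embedding induced by the cyclic order of $T_1, T_2$'s locations in $G$; motion planning with $H$ (and branching hallways) is PSPACE-hard by Corollary~\ref{thm:all-complete}, which covers every planar variant of each hard pair type. I would reduce motion planning with $H$ to motion planning with $G$ by taking any planar input system of $H$ and replacing each copy of $H$ with a copy of $G$ whose $T_1$ and $T_2$ locations are connected exactly as in the input system, while the remaining $2k-4$ locations are left unconnected (dead ends). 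Since $G$'s single state bit is flipped only when a toggle or tripwire tunnel is traversed, and the capped-off locations are unreachable, the restricted behavior of $G$ on its four accessible locations matches $H$ exactly, completing the reduction.

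For the polynomial-time direction, suppose $G$ contains no hard pair of tunnels. Then any two nontrivial tunnels of $G$ form either a tripwire--tripwire or lock--lock pair, forcing one of three cases: (i) all nontrivial tunnels of $G$ are tripwires, (ii) all are locks, or (iii) $G$ has at most one nontrivial tunnel. In case (i), no tunnel's traversability depends on $G$'s state, so motion planning reduces to reachability in the fixed graph obtained by treating each tripwire as an always-traversable edge. In case (ii), no traversal ever changes $G$'s state, so its state is fixed at the initial configuration and each lock is permanently open or closed; motion planning again reduces to graph reachability. In case (iii), $G$ is effectively a 1-tunnel gadget (at most two accessible nontrivial locations per copy of $G$), for which motion planning is known to be in P by standard techniques for two-location reversible gadgets.

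The main obstacle I anticipate is rigorously verifying the hardness reduction, specifically that capping off the $2k-4$ extra tunnels faithfully restricts $G$'s behavior to $H$'s. Although $G$'s single state bit is shared across all tunnels, capping off a tunnel makes its locations unreachable, so that tunnel is never traversed and contributes no state transitions. Planarity is preserved because the unused locations are merely unoccupied boundary points of $G$, maintaining the cyclic order of the accessible four locations and yielding a valid planar variant of $H$.
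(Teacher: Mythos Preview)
Your proposal is correct and follows the same approach as the paper: for hardness, ignore (cap off) the extra tunnels to reduce to the two-tunnel case handled by Corollary~\ref{thm:all-complete}; for the easy direction, observe that the absence of any hard pair forces the nontrivial tunnels to be all tripwires, all locks, or a single toggle, and dispatch each case separately. The only place you are slightly looser than the paper is case~(iii): calling $G$ ``effectively a 1-tunnel gadget'' glosses over any always-traversable trivial tunnels, which the paper handles by noting they split off as independent undirected edges so that the whole system reduces to 1-toggles plus edges and Theorem~\ref{thm:1-toggle} applies---an easy fix.
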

  
First, we provide upper bounds for some classes of simpler gadgets. This shows that, for their category, our hardness results are minimal in the sense that path planning with simpler gadgets in the same class can be solved in P.

\begin{theorem}
    \label{thm:1-state}
Gadgets with only one state are in NL.
\end{theorem}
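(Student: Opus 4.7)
The plan is to reduce the motion planning problem for one-state gadgets directly to directed $s$-$t$ reachability (\textsc{STCON}), which is the canonical NL-complete problem. The key observation is that when every gadget has exactly one state, no traversal can ever change any gadget's state, so the entire configuration of the puzzle is determined solely by the robot's current location. This collapses the usual exponentially-large configuration space down to something linear in the size of the input.

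Concretely, I would first define a directed graph $G$ whose vertices are the locations of all gadgets in the system (with endpoints identified according to the connections/branching hallways, which are always traversable in both directions and whose one-state nature means they contribute only undirected edges between the locations they join). For each one-state gadget and each directed edge $(a,b)$ in its state space, I put a directed edge from $a$ to $b$ in $G$. Then the robot can move from its start location $s$ to its goal location $t$ in the puzzle if and only if there is a directed path from $s$ to $t$ in $G$; this equivalence is immediate since the state never changes, so the set of legal moves available at any location is exactly the same as the edges leaving that vertex in $G$.

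Finally, I would invoke the standard fact that \textsc{STCON} is in NL: a nondeterministic logspace machine guesses the path one vertex at a time, storing only the current vertex and a step counter bounded by $|V(G)|$, both of which fit in $O(\log n)$ bits. Since $G$ can be constructed implicitly from the input (each edge query reduces to inspecting a single gadget's transition table and the matching of connections), the whole algorithm runs in nondeterministic logspace.

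The only subtlety, and the thing I would be careful to spell out, is the implicit representation of $G$: I must check that given a location $v$, listing its out-neighbors in $G$ can be done in logspace by scanning the input description of the gadget containing $v$ and following chains of connections/branching hallways through always-traversable forests. This is routine since branching hallways have a fixed constant size and the connection matching is given explicitly, but it is the one place where care is needed to ensure the NL bound (rather than only P). No obstacle beyond this bookkeeping is expected, since the essential content of the theorem is simply that one-state systems are memoryless.
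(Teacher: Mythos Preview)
Your proposal is correct and follows essentially the same approach as the paper: observe that one-state gadgets induce a static (mixed/directed) graph on locations, and appeal to the fact that $s$-$t$ reachability in such a graph is in NL. The paper's proof is a two-sentence version of your argument (modeling the system as a mixed graph and citing that reachability is in NL), whereas you additionally spell out the logspace implementation details; the content is the same.
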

\begin{proof}
One state gadgets cannot change in any way. Thus they must all be comprised of static descriptions of allowed traversals from one location to another. This can be modeled as a mixed graph. Path planning in mixed graphs is in NL\cite{SAVITCH1970177}.
\end{proof}

The only nontrivial gadget on 1 tunnel with two states which is reversible and deterministic is the 1-toggle.
\begin{theorem}
\label{thm:1-toggle}
Motion planning with 1-toggles is in NL.
\end{theorem}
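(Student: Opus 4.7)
The plan is to reduce 1-toggle motion planning in log-space to ordinary directed $s$-$t$ reachability, which is in NL. Build a directed graph $D$ whose vertices are the locations of the instance, adding both edges $a \to b$ and $b \to a$ for each free connection between $a$ and $b$, and a single edge $a \to b$ for each 1-toggle whose initial state allows the $a \to b$ traversal. The key claim is then that $t$ is reachable from $s$ in the toggle system if and only if $t$ is reachable from $s$ in $D$.

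The easy direction is ``if'': take a simple directed path from $s$ to $t$ in $D$; it uses each edge (and hence each 1-toggle) at most once, always in its initial direction, so the corresponding sequence of moves is legal in the toggle system starting from the given initial configuration. For the converse I would induct on the length $n$ of a walk $W$ from $s$ to $v$ in the toggle system, showing $v$ is reachable from $s$ in $D$. If the last step of $W$ is a free connection, or traverses some toggle $T$ in its initial direction, then that step is an edge of $D$ and induction on the prefix finishes the job. In the remaining case, the last step traverses a toggle $T$ from $u$ to $v$ against its initial direction $v \to u$; then $T$ has been flipped an odd number of times earlier in $W$, and the \emph{first} earlier traversal of $T$ (acting from the initial state) must have been in the $v \to u$ direction. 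Hence the robot was at $v$ at a strictly earlier step of $W$, yielding a shorter prefix walk from $s$ to $v$ to which induction applies.

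The main obstacle is precisely this converse direction: at first glance, 1-toggles support arbitrarily intricate back-and-forth sequences, and reverse-direction traversals might seem to enable reaching new places that a directed path in $D$ cannot. The crucial insight that rescues the induction is that any use of a toggle against its initial direction must have been preceded by a use of that same toggle in the initial direction --- which started at the very location we are now trying to enter --- so any such step is automatically ``witnessed'' by an earlier visit. Once the claim is established, $D$ has linear size, is constructible in log-space from the puzzle, and directed $s$-$t$ reachability is in NL, so motion planning with 1-toggles is in NL as well.
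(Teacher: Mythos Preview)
Your proposal is correct and follows the same overall strategy as the paper: reduce to directed $s$--$t$ reachability by replacing each 1-toggle with a single directed edge in its initial orientation, and then argue that this static graph captures exactly the reachable locations.

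The only difference is in how the ``only if'' direction is organized. The paper argues by descent: take any solving walk, look at the \emph{last} toggle that is traversed more than once, observe that its final two traversals must be in opposite directions, and excise the cycle between them; since all toggles used afterward are used only once (and hence not touched by the excised cycle), the shortened walk is still valid. Iterating yields a walk using each toggle at most once. Your induction on walk length is a bit more direct: whenever the final step goes through a toggle against its initial orientation, the \emph{first} traversal of that toggle already placed the robot at the desired endpoint, so a strictly shorter prefix witnesses reachability. Both arguments rest on the same observation---the first time any toggle is crossed it must be in its initial direction---so the two proofs are essentially the same, with yours avoiding the need to verify that the suffix of the walk remains valid after cycle removal.
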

\begin{proof}
We reduce this problem to ST connectivity in mixed graphs. To solve this problem we simply treat every 1-toggle as a directed edge pointed in the direction the 1-toggle is initially oriented and then run the standard algorithm. It is obvious that if a solution here exists then a path in the 1-toggle planning problem also exists. What is less clear is that this is sufficient to find any such path.

Consider a path which traverses at least one toggle more than once. Consider the last toggle on the path which is traversed more than once. After this toggle is traversed, only toggles which are traversed at most once are on the path. Call this toggle $t$, and let its final traversal be from $u$ to $v$. Since $t$ was traversed repeatedly, there was some previous point in the path where the robot was at $v$, before it traversed $t$ the second-to-last time. Let us create a new path where the robot skips the cycle in the original path from $v$ through $t$ to $u$, then eventually back to $u$ through $t$ to $v$. This path must successfully reach the end, as every toggle after $t$ is traversed at most once, and so is in the same state regardless of whether the cycle is omitted.

Thus, under the assumption that there is a path which traverses toggle more than once, there is another, shorter path. Thus, the shortest path must not traverse toggles more than once, and so such a path must exist if any path exists.
\end{proof}
The remaining two-state two-tunnel deterministic reversible gadgets are also in P. We note that a wire-wire never changes its connectivity and is thus no different then two undirected edges. A lock-lock can never change its state and thus is reducible to a one state gadget, simply zero, one, or two undirected edges. A gadget with a tunnel which does not change and is not changed by the state of the gadget is reducible to two gadgets on one tunnel each, which are in P by Theorem~\ref{thm:1-toggle}. This exhausts the 2-state 2-tunnel reversible undirected gadgets.

\begin{proof}[Proof of Theorem~\ref{thm:general}]
    Now, we can characterize all two state, deterministic, reversible gadgets on any number of tunnels.

    Any gadget with two toggle tunnels, a toggle tunnel and a tripwire tunnel, a toggle tunnel and a lock tunnel or a tripwire and a lock tunnel is sufficient to make motion planning hard, by ignoring all other tunnels and using one of the constructions from this paper.

    We can divide all other gadgets into three categories: those with tripwires and trivial tunnels, those with locks and trivial tunnels, and those with a single toggle and trivial tunnels. The passability of a tunnel in a gadget with only tripwires and trivial tunnels never changes, making motion planning equivalent to st-connectivity. A gadget with only locks and trivial tunnels can never have its state change, allowing us to apply Theorem~\ref{thm:1-state}. A gadget with a single toggle and some number of trivial tunnels can be treated as a one-toggle together with some number of undirected edges. Thus, any system of gadgets of these types is equivalent to a system of 1-toggles and undirected edges. After that, the same argument as in Theorem~\ref{thm:1-toggle} can be used to solve the motion planning problem in that system.
\end{proof}

\section{Open Problems / Conclusion}
This framework for abstract motion planning problems leaves open the question of the computational complexity of motion planning with many other types of gadgets. One can examine gadgets with more states, without the tunnel restriction, or without the deterministic and reversible restrictions. Since this is a vast undertaking with many of the gadgets and their combinations likely to be uninteresting, we suggest some of the following categories to be of particular interest.

\begin{itemize}
\item 3 spinners are the only size of spinner for which motion planning remains open.
\item Three location, 2-state, deterministic, reversible gadgets seem like the obvious `simplest' category of gadgets.
\item Are there any sets of purely deterministic and reversible gadgets for which motion planning is PSPACE-complete (e.g. without branching hallways, which are non-deterministic)?
\item What about reversible but nondeterministic gadgets on two tunnels or three locations?
\end{itemize}

There is currently significant partial progress on all of the listed topics. Please contact us before spending significant time working on the open problems listed to prevent duplication of effort.

\section*{Acknowledgments}
This work grew out of an open problem session from the MIT class on
Algorithmic Lower Bounds: Fun with Hardness Proofs (6.890) from Fall 2014.
We particularly thank Jeffrey Bosboom for posing the problem of analyzing
4-spinners from Legend of Zelda: Oracle of Seasons (in 2015),
for simplifying the 2-state $k$-tunnels proof, and for other helpful discussions.

\bibliographystyle{alpha}
\bibliography{PushPullBib}

\end{document}